\def\NAT@def@citea{\def\@citea{\NAT@separator}}
\theoremstyle{plain}
\newtheorem{theorem}{Theorem}[section]
\newtheorem{lemma}[theorem]{Lemma}
\newtheorem{corollary}[theorem]{Corollary}
\newtheorem{proposition}[theorem]{Proposition}
\theoremstyle{definition}
\newtheorem{definition}[theorem]{Definition}
\theoremstyle{remark}
\newtheorem{remark}{Remark}
\newcommand{\ket}[1]{|#1\rangle}
\newcommand{\vp}{\varphi}
\newcommand{\rk}{\operatorname{rk}}
\newcommand{\img}{\operatorname{img}}
\newcommand{\bs}[1]{\{0,1\}^{#1}}
\newcommand{\bff}[2]{f:\bs{#1}\to\bs{#2}}
\newcommand{\f}[2]{\mathbb{F}_{#1}^{#2}}
\newcommand{\w}[1]{\mathbf{#1}}
\DeclareMathOperator{\GPK}{GPK}
\begin{document}

\title{Further Applications of the Generalised Phase Kick-Back}

\author{
\name{Joaqu\'in Ossorio--Castillo\textsuperscript{{$\ast$}}\thanks{\textsuperscript{$\ast$} Email: joaquin@mestrelab.com, ORCiD: 0000-0001-8592-2263}, Ulises Pastor--D\'iaz\textsuperscript{{$\dagger$}}\thanks{\textsuperscript{$\dagger$} CONTACT: Ulises Pastor--D\'iaz. Email: ulisespastordiaz@gmail.com, ORCiD: 0000-0002-0309-7173} and Jos\'e~M. Tornero\textsuperscript{{$\ddagger$}}\thanks{\textsuperscript{$\ddagger$} Email: tornero@us.es, ORCiD: 0000-0001-5898-1049}}
\affil{\textsuperscript{$\ast,\dagger,\ddagger$} Departamento de Álgebra, Facultad de Matemáticas, Universidad de Sevilla. Avda. Reina Mercedes s/n, 41012 Sevilla (Spain).}
}

\maketitle

\begin{abstract}
In our previous work \cite{gpk}, we defined a quantum algorithmic technique known as the Generalised Phase Kick-Back, or $\GPK$, and analysed its applications in generalising some classical quantum problems, such as the Deutsch--Jozsa problem or the Bernstein--Vazirani problem. We also proved that using this technique we can solve Simon's problem in a more efficient manner.

In this paper we continue analysing the potential of this technique, defining the concept of $\mathbf{y}$-balanced functions and solving a new problem, which further generalises the generalised Deutsch--Jozsa problem (the fully balanced image problem). This problem also underlines the relation between quantum computation and Boolean function theory, and, in particular, the Walsh and Fourier--Hadamard transforms.

We finish our discussion by solving the generalised version of Simon's problem using the $\GPK$ algorithm, while analysing the efficiency of this new solution.
\end{abstract}

\begin{keywords}
Quantum Algorithms; Generalised Phase Kick--Back; Balanced Sets; Simon's Algorithm; Boolean Functions
\end{keywords}

\begin{amscode}
68Q12 (primary); 68Q09, 81P68 (secondary).
\end{amscode}

\begin{section}{Introduction: Phase Kick-Back and Notation}

The $\GPK$ is an algorithmic technique that we introduced and studied in \cite{gpk}, where we also explored generalisations of both the Deutsch--Jozsa problem and the Bernstein--Vazirani problem and used it to solve Simon's problem in a more efficient manner. In this paper, we will further analyse this technique, presenting a new family of problems that we can solve with its help, and analysing the solution to a general version of Simon's problem. However, we must take care of some formalities before proceeding to discuss these new results.

We will use the same notation as we did in \cite{gpk}, which is that of \cite{kaye,niel}. These two books, along with \cite{via,oyt}, can be consulted for more context on the topic of quantum computing.

We will use the bra-ket or Dirac notation, in which quantum states are written as kets, $\ket{\vp}_{n,m}$, where $n$ and $m$ represent the number of qubits in the different registers. When dealing with a one-dimensional system we will omit the subindex.

For elements of the computational basis, states will be represented by binary strings $\mathbf{x}\in\bs{n}$. These strings are marked in bold to highlight their role as vectors in the space $\mathbb{F}_2^n$. Considering this structure, $\mathbf{0}$ will refer to the zero $n$-string $\mathbf{0} = 00\cdots 0$. 

Our focus will be on two main operations. We will denote by $\oplus$ the bitwise exclusive or, which corresponds to the sum in the vector space $\mathbb{F}_2^n$.
If $\mathbf{y}, \mathbf{z}\in\{0,1\}^n$ are two strings, written
$$ \mathbf{y} = y_{n-1} \ldots y_1 y_0, \quad \quad \mathbf{z} = z_{n-1} \ldots z_1 z_0, $$
then we define the exclusive or operation for strings as the bitwise exclusive or, that is,
$$ \mathbf{y} \oplus \mathbf{z} = \left( y_{n-1} \oplus z_{n-1} \right) \ldots \left( y_1 \oplus z_1 \right) \left( y_0 \oplus z_0 \right).$$

The second operation that we will take into account is the inner product in $\{0,1\}^n$, which will be noted by 
$$ \mathbf{y} \cdot \mathbf{z} = \left( y_0 \cdot z_0  \right) \oplus \ldots \oplus \left( y_{n-1}\cdot z_{n-1} \right). $$

Note that, since the xor operation is performed bitwise, we have
$$ \mathbf{x} \cdot ( \mathbf{y} \oplus \mathbf{z} ) = (\mathbf{x} \cdot \mathbf{y}) \oplus (\mathbf{x} \cdot \mathbf{z}). $$   

Concerning quantum computing, we must also recall that for any Boolean function $\bff{n}{m}$ we can construct a quantum gate $\mathbf{U}_f$ such that
$$
\mathbf{U}_f\Big(\ket{\mathbf{x}}_n\otimes\ket{\mathbf{y}}_m\Big) = \ket{\mathbf{x}}_n\otimes\ket{\mathbf{y}\oplus f(\mathbf{x})}_m.
$$
The $\GPK$ will be applied to this sort of general functions, but the phase kick-back (the technique it is based on) deals with a Boolean function $\bff{n}{}$. The first result to bear in mind, which can be found in \cite{kaye}, is the following:

\begin{lemma}
Let $\bff{n}{}$ be a Boolean function and let $\mathbf{U}_f$ be the quantum gate that computes it. Then, in the $n+1$ qubit system, vectors of the form $\ket{\mathbf{x}}_n\otimes\ket{-}$ are eigenvectors with eigenvalue $(-1)^{f(\mathbf{x})}$ for every $\mathbf{x}\in\bs{n}.$
\end{lemma}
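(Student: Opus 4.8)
The plan is to use the defining action of $\mathbf{U}_f$ on computational basis states together with the linearity of the gate, and then resolve a short case analysis on the value of $f(\mathbf{x})$. First I would recall that $\ket{-} = \tfrac{1}{\sqrt{2}}(\ket{0} - \ket{1})$ and that, by definition, $\mathbf{U}_f$ sends a basis state $\ket{\mathbf{x}}_n\otimes\ket{y}$ (with $y\in\bsp$) to $\ket{\mathbf{x}}_n\otimes\ket{y\oplus f(\mathbf{x})}$. Since $\mathbf{U}_f$ is linear, applying it to $\ket{\mathbf{x}}_n\otimes\ket{-}$ and distributing over the superposition in the second register gives
$$\mathbf{U}_f\big(\ket{\mathbf{x}}_n\otimes\ket{-}\big) = \ket{\mathbf{x}}_n\otimes\tfrac{1}{\sqrt{2}}\big(\ket{f(\mathbf{x})} - \ket{1\oplus f(\mathbf{x})}\big).$$

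The key step is then to identify the second register with $(-1)^{f(\mathbf{x})}\ket{-}$. I would do this by splitting into the two possible values of $f(\mathbf{x})$: when $f(\mathbf{x})=0$ the second register reads $\tfrac{1}{\sqrt{2}}(\ket{0}-\ket{1})=\ket{-}$, and when $f(\mathbf{x})=1$ it reads $\tfrac{1}{\sqrt{2}}(\ket{1}-\ket{0})=-\ket{-}$. In both cases the register equals $(-1)^{f(\mathbf{x})}\ket{-}$, which is exactly the scalar we want to factor out. Substituting this back yields $\mathbf{U}_f(\ket{\mathbf{x}}_n\otimes\ket{-}) = (-1)^{f(\mathbf{x})}\,\ket{\mathbf{x}}_n\otimes\ket{-}$, so $\ket{\mathbf{x}}_n\otimes\ket{-}$ is an eigenvector with eigenvalue $(-1)^{f(\mathbf{x})}$ for every $\mathbf{x}\in\bs{n}$.

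There is really no substantial obstacle here; the only point to keep straight is that the flip $y\mapsto y\oplus f(\mathbf{x})$ acts as the identity on $\bsp$ when $f(\mathbf{x})=0$ and as the swap $0\leftrightarrow 1$ when $f(\mathbf{x})=1$, which is precisely what produces the sign on the antisymmetric combination $\ket{-}$. If one prefers to avoid the case split, one can instead write the second register uniformly as $\tfrac{1}{\sqrt{2}}\sum_{y\in\bsp}(-1)^y\ket{y\oplus f(\mathbf{x})}$ and reindex by $y'=y\oplus f(\mathbf{x})$, using $(-1)^{y'\oplus f(\mathbf{x})}=(-1)^{y'}(-1)^{f(\mathbf{x})}$ to pull out the global factor $(-1)^{f(\mathbf{x})}$; this is the same computation phrased without an explicit enumeration of cases.
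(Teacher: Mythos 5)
Your proof is correct: the case analysis on $f(\mathbf{x})$ (and the equivalent reindexing $y' = y \oplus f(\mathbf{x})$) is exactly the standard phase kick-back computation. The paper itself does not prove this lemma --- it cites it from the literature --- and your argument is precisely the textbook one being referenced, so there is nothing to add or correct.
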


Some classical quantum algorithms take advantage of this phenomenon---the Deutsch--Jozsa algorithm \cite{dyj}, the Bernstein--Vazirani algorithm \cite{byv1,byv2} or Grover's algorithm \cite{grover}, for example---by considering a superposition of basic states of the form
$$
\ket{\vp}_{n,1} = \frac{1}{\sqrt{N}}\sum_{\mathbf{x}\in\bs{n}} \ket{\mathbf{x}}_n\otimes\ket{-},
$$
where $N=2^n$, and applying $\mathbf{U}_f$ to it, getting a superposition of these states with the value of $f(\mathbf{x})$ encoded in the amplitude of $\ket{\mathbf{x}}$:
$$
\mathbf{U}_f\ket{\vp}_{n,1} = \mathbf{U}_f\left(\frac{1}{\sqrt{N}}\sum_{\mathbf{x}\in\bs{n}} \ket{\mathbf{x}}_n\otimes\ket{-}\right) = \frac{1}{\sqrt{N}}\sum_{\mathbf{x}\in\bs{n}} (-1)^{f(\mathbf{x})}\ket{\mathbf{x}}_n\otimes\ket{-}.
$$

We should now turn our attention to the generalised version of this technique that we defined in \cite{gpk}. To do so, we will first recall that we will now deal with Boolean functions $\bff{n}{m}$.

Let $\mathbf{U}_f$ be the quantum gate associated with $f$, $\mathbf{H}_m$ the Hadamard gate in an $m$-qubit system and $\ket{\gamma_{\mathbf{y}}}_m = \mathbf{H}_m\ket{\mathbf{y}}_m$ for a given $\mathbf{y}\in\bs{m}$. Then, we have the following analogous result, which we proved in \cite{gpk}:

\begin{lemma}
Let $\ket{\gamma_{\mathbf{y}}}_m = \mathbf{H}_m\ket{\mathbf{y}}_m$ with $\mathbf{y} \in\{0,1\}^m$. Then, for each $\mathbf{x} \in \{0,1\}^n$, the vector $\ket{\mathbf{x}}_n\otimes\ket{\gamma_{\mathbf{y}}}_m$ is an eigenvector of $\mathbf{U}_f$ with eigenvalue $(-1)^{\mathbf{y}\cdot f(\mathbf{x})}.$ 
\end{lemma}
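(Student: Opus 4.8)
The plan is to prove the eigenvalue relation by a direct computation: expand $\ket{\gamma_{\mathbf{y}}}_m$ in the computational basis, push $\mathbf{U}_f$ through the superposition by linearity, and then reassemble the original state after a change of summation variable. This is the natural $m$-qubit analogue of the one-qubit argument behind the previous lemma, where the single controlled bit is replaced by the $m$-bit string $\mathbf{y}$.

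First I would write the Hadamard image explicitly as $\ket{\gamma_{\mathbf{y}}}_m = \mathbf{H}_m\ket{\mathbf{y}}_m = \frac{1}{\sqrt{2^m}}\sum_{\mathbf{z}\in\bs{m}}(-1)^{\mathbf{y}\cdot\mathbf{z}}\ket{\mathbf{z}}_m$, so that the state in question reads $\ket{\mathbf{x}}_n\otimes\ket{\gamma_{\mathbf{y}}}_m = \frac{1}{\sqrt{2^m}}\sum_{\mathbf{z}\in\bs{m}}(-1)^{\mathbf{y}\cdot\mathbf{z}}\,\ket{\mathbf{x}}_n\otimes\ket{\mathbf{z}}_m$. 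Applying $\mathbf{U}_f$ term by term and using its defining action $\mathbf{U}_f(\ket{\mathbf{x}}_n\otimes\ket{\mathbf{z}}_m)=\ket{\mathbf{x}}_n\otimes\ket{\mathbf{z}\oplus f(\mathbf{x})}_m$ replaces the second-register label $\mathbf{z}$ by $\mathbf{z}\oplus f(\mathbf{x})$ while leaving the sign $(-1)^{\mathbf{y}\cdot\mathbf{z}}$ untouched.

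The decisive step is the reindexing $\mathbf{w}=\mathbf{z}\oplus f(\mathbf{x})$. Since $\oplus$ is an involution on $\bs{m}$, this is a bijection of the summation index, and the old phase rewrites as $\mathbf{y}\cdot\mathbf{z}=\mathbf{y}\cdot(\mathbf{w}\oplus f(\mathbf{x}))=(\mathbf{y}\cdot\mathbf{w})\oplus(\mathbf{y}\cdot f(\mathbf{x}))$ by the distributivity of the inner product over $\oplus$ recorded in the introduction. Because $(-1)^{a\oplus b}=(-1)^a(-1)^b$ for bits $a,b$, the factor $(-1)^{\mathbf{y}\cdot f(\mathbf{x})}$ does not depend on $\mathbf{w}$ and factors out of the sum, leaving $\frac{1}{\sqrt{2^m}}\sum_{\mathbf{w}\in\bs{m}}(-1)^{\mathbf{y}\cdot\mathbf{w}}\,\ket{\mathbf{x}}_n\otimes\ket{\mathbf{w}}_m=\ket{\mathbf{x}}_n\otimes\ket{\gamma_{\mathbf{y}}}_m$. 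Hence $\mathbf{U}_f$ returns the original vector multiplied by $(-1)^{\mathbf{y}\cdot f(\mathbf{x})}$, which is exactly the assertion.

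I do not expect a genuine obstacle, since every manipulation is linear and the change of variable is reversible. The one point requiring care is keeping the two roles of $\oplus$ distinct---addition in $\bs{m}$ inside the ket label versus the mod-$2$ sum controlling the sign---and applying the distributive identity $\mathbf{y}\cdot(\mathbf{w}\oplus f(\mathbf{x}))=(\mathbf{y}\cdot\mathbf{w})\oplus(\mathbf{y}\cdot f(\mathbf{x}))$ with all of $\mathbf{y},\mathbf{w},f(\mathbf{x})$ in $\bs{m}$. Once the phase is collected and the remaining sum is recognised as $\mathbf{H}_m\ket{\mathbf{y}}_m$, the proof is complete.
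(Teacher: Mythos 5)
Your proof is correct: expanding $\ket{\gamma_{\mathbf{y}}}_m$ in the computational basis, applying $\mathbf{U}_f$ by linearity, and reindexing the sum by $\mathbf{w}=\mathbf{z}\oplus f(\mathbf{x})$ (using $(-1)^{\mathbf{y}\cdot(\mathbf{w}\oplus f(\mathbf{x}))}=(-1)^{\mathbf{y}\cdot\mathbf{w}}(-1)^{\mathbf{y}\cdot f(\mathbf{x})}$) recovers the original vector times the claimed eigenvalue. Note that this paper does not actually contain a proof of the lemma---it is quoted from the authors' earlier work---but your direct computation is the standard argument for this generalised phase kick-back, the natural $m$-qubit analogue of the one-qubit $\ket{-}$ eigenvector calculation, and it is complete as written.
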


The idea is similar to that of the one-dimensional phase kick-back, but now we have a string $\mathbf{y}\in\bs{m}$ that interacts with $f(\mathbf{x})$. Such a $\mathbf{y}$ will be called a \textit{marker}.
 
 We will finish the introduction by presenting the Generalised Phase Kick-Back algorithm or $\GPK$, which will be widely used in what remains of the text.

\begin{definition}{(Generalised Phase Kick-Back algorithm.)}
We will refer to the following algorithm as $\GPK$ algorithm for $\mathbf{y}$ or $\GPK(\mathbf{y})$. The input of the algorithm will be a given Boolean function $\bff{n}{m}$.

\vspace{5mm}

$\mathbb{STEP}$ $1$

$\ket{\varphi_0}_{n,m} = \ket{\mathbf{0}}_n\otimes\ket{\mathbf{y}}_m.$

\vspace{5mm}

$\mathbb{STEP}$ $2$

$\ket{\varphi_1}_{n,m} = \mathbf{H}_{n+m}\ket{\varphi_1}_{n,m} = \left(\displaystyle\frac{1}{\sqrt{N}}\displaystyle\sum_{\mathbf{x}\in\bs{n}} \ket{\mathbf{x}}_n\right)\otimes\ket{\gamma_{\mathbf{y}}}_m.$

\vspace{5mm}

$\mathbb{STEP}$ $3$

$\ket{\varphi_2}_{n,m} = \mathbf{U}_f\ket{\varphi_1}_{n,m} = \left(\displaystyle\frac{1}{\sqrt{N}}\displaystyle\sum_{\mathbf{x}\in\bs{n}} (-1)^{\mathbf{y}\cdot f(\mathbf{x})}\ket{\mathbf{x}}_n\right)\otimes\ket{\gamma_{\mathbf{y}}}_m.$

\vspace{5mm}

$\mathbb{STEP}$ $4$

$\ket{\varphi_3}_{n,m} = \left(\mathbf{H}^{\otimes n}\otimes \mathbf{I}^{\otimes m}\right)\ket{\varphi_2}_{n,m} = \displaystyle\frac{1}{N}\displaystyle\sum_{\mathbf{z}\in\bs{n}} \left(\displaystyle\sum_{\mathbf{x}\in\bs{n}} (-1)^{\mathbf{y}\cdot f(\mathbf{x})\oplus\mathbf{x}\cdot\mathbf{z}}\right)\ket{\mathbf{z}}_n\otimes\ket{\gamma_{\mathbf{y}}}_m.$

\vspace{5mm}

At this point, the second register can be discarded.

\vspace{5mm}

$\mathbb{STEP}$ $5$

We measure the first register and name the result $\delta$.
\end{definition}

Summing up, the amplitude associated to an element of the computational basis, $\w{z}\in\bs{n}$, before measuring will be:
\begin{equation}\label{eq:1}
    \alpha_{\w{z}} = \frac{1}{N}\displaystyle\sum_{\mathbf{x}\in\bs{n}} (-1)^{\mathbf{y}\cdot f(\mathbf{x})\oplus\mathbf{x}\cdot\mathbf{z}}.
\end{equation}
As we will remark further along the way, this is a normalised version of the Walsh transform of $f$, whose exhaustive definition and treatment can be consulted in \cite{bfc}. It is an operator on the set of Boolean functions such that, for a function $f:\bs{n}\to\bs{m}$, it is defined by
\begin{equation}\label{eq:2}
    W_f(\w{z},\w{y}) = \sum_{\w{x}\in\{0,1\}^n} (-1)^{\w{y}\cdot f(\w{x})\oplus \w{x}\cdot\w{z}}.
\end{equation}

However, for the moment we will proceed to lay the ground needed to present a new family of problems that will later be solved using this technique.
    
\end{section}

\begin{section}{$\mathbf{y}$-Balanced Functions}

We will start by putting the focus on a problem that generalises the already generalised version of the Deutsch--Jozsa problem presented in \cite{gpk}. To study this problem, we must first define the concepts of $\mathbf{y}$-balanced and $\mathbf{y}$-constant functions, which are themselves a generalisation of concepts of balanced and constant functions that were used in the Deutsch--Jozsa problem and its generalised version \cite{dyj,gpk}.

What is more, these new definitions have a deep relationship with the Fourier--Hadamard and Walsh transforms and the study of Boolean functions. We will underline these connections as we move forward, but for a general reference on these topics, \cite{bfc,mac} can be consulted.

\begin{definition}{($\mathbf{y}$-Balanced function.)}
Let $f:\{0,1\}^n\to \{0,1\}^m$ be a Boolean function and $\mathbf{y}\in \{0,1\}^m$ a string. We say that $f$ is $\mathbf{y}$-balanced if we have $f(\mathbf{x})\cdot \mathbf{y} = 0$ for half of the strings $\mathbf{x}\in\{0,1\}^n$ and $f(\mathbf{x}) \cdot \mathbf{y} = 1$ for the other half. We will also say that $\w{y}$ balances $f$.
\end{definition}

In the same way, we can define the idea of $\mathbf{y}$-constant functions.

\begin{definition}{($\mathbf{y}$-Constant function.)}
Let $f:\{0,1\}^n\to \{0,1\}^m$ be a Boolean function and $\mathbf{y}\in \{0,1\}^m$ a string. We say that $f$ is $\mathbf{y}$-constant if for every string $\mathbf{x}\in\{0,1\}^n$ the result of $f(\mathbf{x}) \cdot \mathbf{y}$ is the same. We will also say that $\w{y}$ makes $f$ constant.
\end{definition}

Let us take a look at another characterisation that can help us clarify the relation of these functions with the $\GPK$.

\begin{proposition}\label{ampys}
Let $\bff{n}{m}$ be a Boolean function. Then, $f$ is $\mathbf{y}$-balanced if and only if:
$$
\sum_{\mathbf{x}\in\bs{n}} (-1)^{f(\mathbf{x})\cdot\mathbf{y}} = 0.
$$

Similarly, $f$ is $\mathbf{y}$-constant if and only if:
$$
\left|\sum_{\mathbf{x}\in\bs{n}} (-1)^{f(\mathbf{x})\cdot\mathbf{y}} \right| = 2^n.
$$
\end{proposition}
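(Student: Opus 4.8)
The plan is to reduce both equivalences to a single counting identity, exploiting that $f(\w{x})\cdot\w{y}$ lives in $\mathbb{F}_2$ and so takes only the values $0$ or $1$. First I would partition the domain $\bs{n}$ into the two fibres
$$
A = \{\w{x}\in\bs{n} : f(\w{x})\cdot\w{y} = 0\}, \qquad B = \{\w{x}\in\bs{n} : f(\w{x})\cdot\w{y} = 1\},
$$
and set $a = |A|$, $b = |B|$. Since these sets are disjoint and cover the whole domain, $a + b = 2^n$. Splitting the sum over these two fibres and using $(-1)^0 = 1$, $(-1)^1 = -1$ immediately gives the master identity
$$
\sum_{\w{x}\in\bs{n}} (-1)^{f(\w{x})\cdot\w{y}} = a - b.
$$
Everything else is linear algebra on the pair $(a,b)$ subject to the constraint $a+b=2^n$.

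For the balanced case, I would argue both directions at once. By definition $f$ is $\w{y}$-balanced precisely when $a = b$, i.e.\ $a = b = 2^{n-1}$; combined with the master identity this is equivalent to $a - b = 0$, which is exactly the vanishing of the sum. For the reverse implication one only needs to observe that the system $a + b = 2^n$, $a - b = 0$ has the unique solution $a = b = 2^{n-1}$, so a vanishing signed sum forces the balanced partition.

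For the constant case, $f$ being $\w{y}$-constant means $f(\w{x})\cdot\w{y}$ takes a single value across all $\w{x}$, i.e.\ either $B = \emptyset$ (so $a = 2^n$, $b = 0$) or $A = \emptyset$ (so $a = 0$, $b = 2^n$); in both cases $|a - b| = 2^n$. Conversely, since $0 \le a, b \le 2^n$ and $a + b = 2^n$, the equality $|a - b| = 2^n$ can hold only at the two extreme points, forcing one fibre to be empty and hence $f$ to be $\w{y}$-constant. I do not expect a genuine obstacle here: the entire argument is a direct fibre count, and the only point requiring a moment's care is the reverse directions, where one must note that the constraint $a + b = 2^n$ pins down $(a,b)$ uniquely from the value of $a - b$.
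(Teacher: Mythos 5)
Your proof is correct and complete: the fibre count $a+b=2^n$, $\sum_{\w{x}}(-1)^{f(\w{x})\cdot\w{y}} = a-b$ immediately yields both equivalences, including the reverse directions. The paper itself states this proposition without proof, treating it as immediate from the definitions, and your argument is precisely the straightforward counting that is being left implicit there.
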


As we can see, these definitions are not arbitrary, as the previous expressions correspond to the non-normalised amplitude of the state $\ket{\mathbf{0}}_n$ in the first $n$-qubit register of $\ket{\vp_3}_{n,m}$, that is, $\alpha_{\w{0}}$ in Equation (\ref{eq:1}). This also underlies the link between these functions and the Walsh transform, $W_f(\w{0},\w{y})$, in Equation (\ref{eq:2}). For instance, a certain $\w{y}\in\bs{m}$ makes $f$ constant if and only if it is not contained in the support of $W_f(\w{0},\cdot)$. 

Following Proposition \ref{ampys}, the next theorem becomes trivial.

\begin{theorem}{(GPK and balanced functions.)}\label{gpkbal}
Let $f:\bs{n}\to\bs{m}$ be a Boolean function and $\mathbf{y}\in\bs{m}$ for which $f$ is $\mathbf{y}$-constant. Then the output of applying the $\GPK$ algorithm to $f$ with $\mathbf{y}$ as a marker is $\mathbf{0}$.

Let $f:\bs{n}\to\bs{m}$ be a Boolean function and $\mathbf{y}\in\bs{m}$ for which $f$ is $\mathbf{y}$-balanced. Then the output of applying the $\GPK$ algorithm to $f$ with $\mathbf{y}$ as a marker is different from $\mathbf{0}$.
\end{theorem}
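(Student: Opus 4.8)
The plan is to read off the amplitude $\alpha_{\w{0}}$ of the basis state $\ket{\w{0}}_n$ directly from Equation (\ref{eq:1}) and then feed it into the dichotomy supplied by Proposition \ref{ampys}. First I would specialise Equation (\ref{eq:1}) to $\w{z}=\w{0}$. Since $\w{x}\cdot\w{0}=0$ for every $\w{x}\in\bs{n}$, the exponent $\w{y}\cdot f(\w{x})\oplus\w{x}\cdot\w{z}$ collapses to $\w{y}\cdot f(\w{x})=f(\w{x})\cdot\w{y}$ by symmetry of the inner product, so that
$$
\alpha_{\w{0}}=\frac{1}{N}\sum_{\w{x}\in\bs{n}}(-1)^{f(\w{x})\cdot\w{y}},
$$
which is exactly $1/N$ times the quantity that Proposition \ref{ampys} controls. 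This identification is the whole content of the theorem; everything afterwards is a case split.

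For the $\w{y}$-constant case, Proposition \ref{ampys} gives $\left|\sum_{\w{x}\in\bs{n}}(-1)^{f(\w{x})\cdot\w{y}}\right|=2^n=N$, whence $|\alpha_{\w{0}}|=1$. I would then invoke normalisation: the gates applied in Steps $2$ through $4$ are unitary, so $\ket{\vp_3}_{n,m}$ is a unit vector, and because it factors as a first-register state tensored with the unit vector $\ket{\gamma_{\w{y}}}_m$, the amplitudes of the first register satisfy $\sum_{\w{z}\in\bs{n}}|\alpha_{\w{z}}|^2=1$. The value $|\alpha_{\w{0}}|^2=1$ therefore forces every other amplitude to vanish, so the measurement in Step $5$ returns $\delta=\w{0}$ with certainty. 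For the $\w{y}$-balanced case the argument is even shorter: Proposition \ref{ampys} gives $\sum_{\w{x}\in\bs{n}}(-1)^{f(\w{x})\cdot\w{y}}=0$, hence $\alpha_{\w{0}}=0$, and the probability $|\alpha_{\w{0}}|^2$ of observing $\w{0}$ is zero, so necessarily $\delta\neq\w{0}$.

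There is essentially no hard step here, which is why the statement was billed as trivial after Proposition \ref{ampys}. The single point that deserves a word is the normalisation argument in the constant case: the vanishing of the remaining amplitudes does not follow from $|\alpha_{\w{0}}|=1$ in isolation but rather from the fact that unitarity keeps $\ket{\vp_3}_{n,m}$ of unit norm, so that a single amplitude of modulus one exhausts the entire probability budget. Once the substitution $\w{z}=\w{0}$ and the symmetry of the inner product reduce $\alpha_{\w{0}}$ to the Walsh-type sum $W_f(\w{0},\w{y})/N$, both conclusions are immediate.
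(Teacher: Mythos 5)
Your proof is correct and follows the same route as the paper: both identify the amplitude $\alpha_{\w{0}}$ of $\ket{\w{0}}_n$ in the final state with the sum controlled by Proposition \ref{ampys}, then split into the balanced case ($\alpha_{\w{0}}=0$, so the output cannot be $\w{0}$) and the constant case ($|\alpha_{\w{0}}|=1$, so the output is $\w{0}$ with certainty). The only difference is that you spell out the normalisation argument---that $|\alpha_{\w{0}}|=1$ together with unitarity forces all other amplitudes to vanish---which the paper's two-line proof leaves implicit.
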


\begin{proof}
As noted previously, using Proposition \ref{ampys} we can link the fact that $f$ is $\mathbf{y}$-balanced or $\mathbf{y}$-constant with the amplitude of the final state of the $\GPK$ algorithm.

If the function is $\mathbf{y}$-balanced, then said amplitude is $0$, while if it is $\mathbf{y}$-constant it is $1$.
\end{proof}

Following things up, we are going to study a specific class of functions whose interest will become clear shortly.

\begin{definition}{(Fully balanced functions.)}
Let $\bff{n}{m}$ be a Boolean function. We say that it is fully balanced if for every $\mathbf{y}\in\bs{m}$, $f$ is either $\mathbf{y}$-balanced or $\mathbf{y}$-constant.
\end{definition}

These functions are actually worth our time because they have some sort of structure, as it will made evident by the next result.

\begin{theorem}\label{fullyb}
A Boolean function $\bff{n}{m}$ is fully balanced if and only if the image of $f$, $\img(f)$, is an affine space and $\#f^{-1}(\mathbf{x})$ is the same for every $\mathbf{x}\in \img(f)$.
\end{theorem}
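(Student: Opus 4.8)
The plan is to work throughout with the character sum
$$ S(\w{y}) \;=\; \sum_{\w{x}\in\bs{n}} (-1)^{f(\w{x})\cdot\w{y}} \;=\; \sum_{\w{w}\in\img(f)} \#f^{-1}(\w{w})\,(-1)^{\w{w}\cdot\w{y}}, $$
where the second equality just groups the terms according to the value $\w{w}=f(\w{x})$. By Proposition~\ref{ampys}, $f$ is fully balanced precisely when $S(\w{y})\in\{0,\pm 2^n\}$ for every $\w{y}\in\bs{m}$. Writing $c_{\w{w}}=\#f^{-1}(\w{w})$, so that $c_{\w{w}}>0$ for $\w{w}\in\img(f)$ and $\sum_{\w{w}}c_{\w{w}}=2^n$, this says exactly that the Fourier--Hadamard transform of the fibre-counting function $\w{w}\mapsto c_{\w{w}}$ takes only the values $0$ and $\pm 2^n$. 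I would then set $H=\{\w{y}\in\bs{m}: |S(\w{y})|=2^n\}$, the set of markers making $f$ constant, and note $\w{0}\in H$.

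For the easy implication ($\Leftarrow$), suppose $\img(f)=\w{a}\oplus U$ for a linear subspace $U\subseteq\f{2}{m}$ and that all fibres have a common size $c=2^n/\#\img(f)$. Then a direct computation factors $S(\w{y})$ as $c\,(-1)^{\w{a}\cdot\w{y}}\sum_{\w{u}\in U}(-1)^{\w{u}\cdot\w{y}}$, and the standard character-sum identity makes the inner sum equal to $\#U$ when $\w{y}\in U^{\perp}$ and $0$ otherwise. Hence $S(\w{y})=\pm 2^n$ or $0$ according to whether $\w{y}\in U^{\perp}$, so $f$ is fully balanced.

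The substance is the converse ($\Rightarrow$), and this is where I expect the main difficulty to lie: I must manufacture linear structure on $\img(f)$ out of the purely numerical hypothesis $S(\w{y})\in\{0,\pm 2^n\}$. First I would extract, for each $\w{y}\in H$, the equality case of the triangle inequality $|S(\w{y})|\le\sum_{\w{w}}c_{\w{w}}$: since every $c_{\w{w}}>0$, attaining the value $2^n$ forces all the signs $(-1)^{\w{w}\cdot\w{y}}$ to agree, i.e. $\w{w}\cdot\w{y}$ takes a constant value $\varepsilon(\w{y})\in\bsp$ as $\w{w}$ ranges over $\img(f)$. Fixing a base point $\w{w}_0\in\img(f)$ and translating to $V_0=\img(f)\oplus\w{w}_0$, this constancy reads $\w{v}\cdot\w{y}=0$ for all $\w{v}\in V_0$ and all $\w{y}\in H$; hence $H\subseteq W^{\perp}$, where $W=\mathrm{span}(V_0)$, while $V_0\subseteq W$ by definition.

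The key trick I would then use is to invert the transform. Since $S$ equals $\pm 2^n$ on $H$ and $0$ off it, the inversion formula gives, for each $\w{w}$,
$$ c_{\w{w}} \;=\; \frac{2^n}{2^m}\sum_{\w{y}\in H}(-1)^{\varepsilon(\w{y})\oplus(\w{w}\cdot\w{y})}. $$
For $\w{w}\in\img(f)$ the exponent $\varepsilon(\w{y})\oplus(\w{w}\cdot\w{y})$ vanishes for every $\w{y}\in H$ by the constancy just established, so every term is $+1$ and $c_{\w{w}}=2^n\#H/2^m$ is independent of $\w{w}$ --- this is exactly the equal-fibre conclusion, and it falls out almost for free from the inversion. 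Summing over $\img(f)$ then yields $\#\img(f)\cdot\#H=2^m$. Finally I would close by dimension counting: from $V_0\subseteq W$ and $H\subseteq W^{\perp}$ we get $\#\img(f)=\#V_0\le\#W$ and $\#H\le\#W^{\perp}=2^m/\#W$, a product bounded by $2^m$; equality forces $\#V_0=\#W$, whence $V_0=W$ is a linear subspace and $\img(f)=\w{w}_0\oplus W$ is affine. The one point to handle carefully is that the equal-fibre and the affineness conclusions are genuinely interlocked through the identity $\#\img(f)\cdot\#H=2^m$, so I would be sure to establish the constancy of $\varepsilon(\w{y})$ on $H$ before invoking inversion.
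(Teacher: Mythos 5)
Your proof is correct, and it takes a genuinely different route from the paper's. The paper proves the forward implication by induction on the output length $m$: it finds a coordinate vector $\w{e}_1$ that balances $f$, splits the domain into the sets $S_0,S_1$ according to the first output bit, shows by a counting argument that the restrictions $f_0,f_1$ are fully balanced and are balanced by exactly the same strings, and then glues the two affine images supplied by the induction hypothesis; the reverse implication is left implicit. You instead work directly with the character sum $S(\w{y})=W_f(\w{0},\w{y})$ and apply Fourier--Hadamard inversion to the fibre-counting function $\w{w}\mapsto\#f^{-1}(\w{w})$: the sign-constancy of $S$ on $C(f)$ (which, via Proposition \ref{ampys}, is just a restatement of $\w{y}$-constancy) makes every term of the inversion sum equal to $+1$ when $\w{w}\in\img(f)$, giving the equal-fibre property at once, and affineness then drops out of the identity $\#\img(f)\cdot\#C(f)=2^m$ combined with the inclusions $\img(f)\oplus\w{w}_0\subseteq W$ and $C(f)\subseteq W^{\perp}$, where $W=\langle\img(f)\oplus\w{w}_0\rangle$. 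This is in substance the ``second proof via the Walsh transform'' that the authors mention in a remark and defer to another paper. As for what each approach buys: the induction is elementary and needs no transform machinery, but as written it applies the inductive hypothesis to $f_0,f_1$, whose domains $S_0,S_1$ are arbitrary $2^{n-1}$-element subsets of $\bs{n}$ rather than full cubes, so the statement being inducted on really has to be formulated for functions on arbitrary finite domains (equivalently, for multisets of values in $\bs{m}$); your inversion argument avoids that reformulation entirely, treats both implications uniformly, and yields the explicit parameter relations $\#f^{-1}(\w{w})=2^{n}\#C(f)/2^{m}$ and $\#\img(f)\cdot\#C(f)=2^{m}$ as by-products, which is precisely the content of Theorem \ref{fbifun}.
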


\begin{proof}
We will proceed by induction on $m$. For $m = 1$ the result is trivial, as $\img(f)$ is either $\{0\}$, $\{1\}$ or $\{0,1\}$---which are all of them affine spaces---and in the latter case $\#f^{-1}(0)$ must be the same as $\#f^{-1}(1)$ for $1$ to balance $f$.

Suppose now that the result is true for any $m = k$. Let us prove it for $m = k+1$. Let $\w{e}_i = \underbrace{0\ldots0}_{i-1}1\underbrace{0\ldots0}_{k+1-i}$ for $i=1,\ldots,k+1$. If every $\w{e}_i$ for $i=1,\ldots,k+1$ makes $f$ constant, then the function is constant and the result is trivial. Otherwise, there must be an $i$ such that $\w{e}_i$ balances $f$. Let us suppose without loss of generality that it is $\w{e}_1$. 

We will notate as $S_0$ the set of elements $\w{x}\in\{0,1\}^n$ such that $f(\w{x})$ starts by zero, and $S_1$ the set of $\w{x}\in\{0,1\}^n$ such that $f(\w{x})$ starts by one. As $\w{e}_1$ balances $f$, we know that $\#S_0 = \#S_1 = 2^k$.

We will define now two functions, $f_0:S_0\to\{0,1\}^k$ as $f$ restricted to $S_0$, and $f_1:S_1\to\{0,1\}^k$ as $f$ restricted to $S_1$. We will prove that both $f_0$ and $f_1$ must be fully balanced. Let $\w{y}\in\{0,1\}^k$, then $\w{z}_0 = 0\w{y}\in\{0,1\}^{k+1}$---which is the concatenation of $0$ and the $\w{y}$ string---will either make constant or balance $f$. If it makes $f$ constant, then it is trivial that $\w{y}$ makes $f_0$ and $f_1$ constant. If we call $t$ the number of elements of $\w{x}\in S_0$ such that $\w{z}_0\cdot f(\w{x}) = 0$, then, for $\w{z}_0$ to balance $f$ there must be $2^k-t$ elements $\w{x'}\in S_1$ such that $\w{z}_0\cdot f(\w{x}') = 0$. If we consider now $\w{z}_1 = 1\w{y}\in\{0,1\}^{k+1}$, then, the number of of elements of $\w{x}\in S_0$ such that $\w{z}_1\cdot f(\w{x}) = 0$ is still $t$, but now the amount of elements $\w{x'}\in S_1$ such that $\w{z}_1\cdot f(\w{x}') = 0$ is also $t$, so for $\w{z}_1$ to balance $f$ or make it constant we need that $t \in\{0,2^{k-1},2^k\}$, so $\w{y}$ will either balance $f_0$ and $f_1$ or make them constant.

This also implies that if $\w{y}$ makes $f_0$ constant, then it has to also make $f_1$ constant, and conversely. We know that both $f_0$ and $f_1$ are fully balanced, and they satisfy the induction hypothesis, so $\img(f_0)$ and $\img(f_1)$ are both affine spaces such that $\#f_i^{-1}(\w{x})$ is the same for every $\w{x}\in\img(f_i)$, for $i=0,1$. Furthermore, as they both are balanced by the same strings, then $\img(f_0)$ and $\img(f_1)$ are actually two affine spaces with the same underlying vector space, which implies that $\img(f)$ is an affine space. As $f(S_0)$ and $f(S_1)$ are disjoint, then $f$ also satisfies the condition that $\#f^{-1}(\w{x})$ is the same for every $\w{x}\in\img(f)$.
\end{proof}

Before moving on, there are some observations we should make.

\begin{remark}
The proof for Theorem \ref{fullyb} that we have presented here is just one of at least two that we have developed. The other one, which uses the properties of the Walsh transform, will be presented in a different paper, together with some other results on the topic. The main reason for not presenting it here is that it requires a large introduction on Boolean function theory, which is not as relevant for the rest of the paper, but it is more relevant in the context of the Walsh transform.
\end{remark}

\begin{remark}
We should also mention that this result is not inherently new. In Chapter $13$ of \cite{mac} there is a result (Lemma $6$), which deals with this result when dealing just with sets (instead of multisets). Although this lemma does not prove the result in the general situation for sets, the proof that they present can be adapted to do so.
\end{remark}

Let us now define the main concepts that will accompany us for the reminder of this problem.

\begin{definition}{(Constant set, balancing set and balancing number.)}
Let $\bff{n}{m}$ be a Boolean function, we say that $C(f)$, its constant set, is the set of all $\mathbf{y}\in\bs{m}$ that make $f$ constant. 

We define $B(f)$, its balancing set, as the set of all $\mathbf{y}\in\bs{m}$ that balance $f$, and $b(f)$, its balancing index, as:

$$
b(f) = \frac{\#B(f)}{\#C(f)}.
$$
\end{definition}

The balancing index is an integer number, as we shall see shortly.

\begin{proposition}
Let $\bff{n}{m}$ be a Boolean function, then $C(f)$ is a vector space and $B(f)$ is a disjoint union of affine spaces whose underlying vector space is $C(f)$.
\end{proposition}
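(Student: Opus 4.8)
The plan is to establish the two claims separately, relying throughout on the distributivity relation $\w{x}\cdot(\w{y}\oplus\w{z}) = (\w{x}\cdot\w{y})\oplus(\w{x}\cdot\w{z})$ recorded in the introduction, applied with $\w{x}$ replaced by the values $f(\w{x})$.

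First I would show that $C(f)$ is a subspace of $\mathbb{F}_2^m$. Since $f(\w{x})\cdot\w{0} = 0$ for every $\w{x}$, the zero string makes $f$ constant, so $\w{0}\in C(f)$. For closure under $\oplus$, suppose $\w{y}_1,\w{y}_2\in C(f)$, say $f(\w{x})\cdot\w{y}_i = c_i$ for all $\w{x}$, with each $c_i$ constant. Distributivity then gives $f(\w{x})\cdot(\w{y}_1\oplus\w{y}_2) = c_1\oplus c_2$ independently of $\w{x}$, so $\w{y}_1\oplus\w{y}_2\in C(f)$. Over $\mathbb{F}_2$, closure under addition together with containing $\w{0}$ suffices to conclude that $C(f)$ is a vector subspace.

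The crux is the second claim, for which I would prove that $B(f)$ is a union of cosets of $C(f)$. Fix $\w{y}\in B(f)$ and $\w{c}\in C(f)$, and let $c$ denote the constant value of $f(\w{x})\cdot\w{c}$. By distributivity, $f(\w{x})\cdot(\w{y}\oplus\w{c}) = (f(\w{x})\cdot\w{y})\oplus c$ for every $\w{x}$. Since $\w{y}$ balances $f$, exactly $2^{n-1}$ of the values $f(\w{x})\cdot\w{y}$ equal $0$; if $c=0$ these values are unchanged, while if $c=1$ each is complemented, so in either case exactly $2^{n-1}$ of the values $f(\w{x})\cdot(\w{y}\oplus\w{c})$ equal $0$. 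Hence $\w{y}\oplus\w{c}$ also balances $f$, that is, $\w{y}\oplus C(f)\subseteq B(f)$.

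Finally I would assemble the conclusion. The containment above says that whenever $B(f)$ meets a coset of $C(f)$ it contains that coset entirely, so $B(f)$ is exactly a union of cosets of $C(f)$. Distinct cosets are disjoint, and each is by definition an affine space whose underlying vector space is $C(f)$, which yields the stated decomposition; as a byproduct $\#B(f)$ is a multiple of $\#C(f)$, so $b(f)$ is an integer equal to the number of cosets involved. I do not anticipate a genuine obstacle: the whole argument rests on the single algebraic fact that $f(\w{x})\cdot(-)$ is $\mathbb{F}_2$-linear in the marker, and the only point requiring a moment's care is checking that complementing a balanced $\{0,1\}$-valued family of values leaves it balanced.
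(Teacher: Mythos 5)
Your proposal is correct and follows essentially the same route as the paper's proof: closure of $C(f)$ under $\oplus$ via linearity of $f(\w{x})\cdot(-)$ in the marker, and the observation that adding a constant-making string to a balancing string yields another balancing string, so that $B(f)$ decomposes into cosets of $C(f)$. You merely spell out the details (the zero string, the complementation argument, disjointness of cosets) that the paper's terse proof leaves implicit.
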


\begin{proof}
The fact that $C(f)$ is a vector space is trivial, as if $\w{y}_1,\w{y}_2\in\{0,1\}^{m}$ make $f$ constant, then $\w{y}_1\oplus\w{y}_2$ will also make it constant. On the other hand, if $\w{y}_1$ makes $f$ constant and $\w{y}_2$ balances it, then $\w{y}_1\oplus\w{y}_2$ will also balance $f$, so we have the promised structure.
\end{proof}

Incidentally, we will see after the following result that this idea will allow us to determine whether a function is fully balanced or not by knowing $b(f)$ and the rank of $\img(f)$.

The final result of this section will be that of explicitly determining the dimension and cardinality of $C(f)$ and $B(f)$ in the fully balanced case.

\begin{theorem}{(Parameters of fully balanced functions.)}\label{fbifun}
Let $\bff{n}{m}$ be a fully balanced function and let $r$ be the dimension of $\img(f)$, then the dimension of $C(f)$ is $m-r$, and thus $\#C(f) = 2^{m-r}$, $\#B(f) = (2^r-1)2^{m-r}$ and $b(f) = 2^r-1$.
\end{theorem}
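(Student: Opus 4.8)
The plan is to reduce everything to the structure theorem already proved. By Theorem \ref{fullyb}, since $f$ is fully balanced, $\img(f)$ is an affine space; I would write $\img(f) = \mathbf{v}_0 \oplus V$, where $V\subseteq\f{2}{m}$ is its underlying vector subspace, so that $\dim V = r$. The two remaining ingredients are a description of $C(f)$ as a concrete subspace and the fact (immediate from full balancedness) that $\bs{m}$ splits as the disjoint union $C(f)\sqcup B(f)$.

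The key step, and the one I would treat most carefully, is to identify $C(f)$ with the orthogonal complement $V^{\perp} = \{\mathbf{y}\in\bs{m} : \mathbf{y}\cdot\mathbf{v} = 0 \text{ for all } \mathbf{v}\in V\}$. A marker $\mathbf{y}$ makes $f$ constant exactly when $\mathbf{y}\cdot f(\mathbf{x})$ takes a single value as $\mathbf{x}$ ranges over $\bs{n}$, that is, when $\mathbf{y}\cdot\mathbf{w}$ is constant over $\mathbf{w}\in\img(f)$. Writing $\mathbf{w} = \mathbf{v}_0\oplus\mathbf{v}$ with $\mathbf{v}\in V$ and using bilinearity of the inner product gives $\mathbf{y}\cdot\mathbf{w} = (\mathbf{y}\cdot\mathbf{v}_0)\oplus(\mathbf{y}\cdot\mathbf{v})$; since $V$ is a subspace (so $\mathbf{0}\in V$), this quantity is constant in $\mathbf{v}$ if and only if $\mathbf{y}\cdot\mathbf{v} = 0$ for every $\mathbf{v}\in V$. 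Hence $C(f) = V^{\perp}$.

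From there the counting is elementary linear algebra over $\f{2}{}$: the standard inner product is a non-degenerate symmetric bilinear form, so $\dim C(f) = \dim V^{\perp} = m-r$ and $\#C(f) = 2^{m-r}$. Because $f$ is fully balanced, each $\mathbf{y}\in\bs{m}$ either makes $f$ constant or balances it, and no $\mathbf{y}$ can do both (being balanced forces both values $0$ and $1$ to occur among the $\mathbf{y}\cdot f(\mathbf{x})$); thus $\bs{m} = C(f)\sqcup B(f)$ and $\#B(f) = 2^m - 2^{m-r} = (2^r-1)2^{m-r}$, whence $b(f) = \#B(f)/\#C(f) = 2^r-1$. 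The only genuine obstacle is the identification $C(f)=V^{\perp}$; everything afterwards is routine. As a cross-check, $b(f)$ can also be read straight off the preceding proposition, since $B(f)$ consists of precisely the cosets of $C(f)$ in $\bs{m}$ other than $C(f)$ itself, of which there are $2^{m}/2^{m-r} - 1 = 2^r-1$.
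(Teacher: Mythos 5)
Your proof is correct and follows essentially the same route as the paper's: both hinge on identifying $C(f)$ with the orthogonal complement of the underlying vector space of $\img(f)$ (guaranteed affine by Theorem \ref{fullyb}), and then counting $\#B(f) = 2^m - 2^{m-r}$ from the disjoint union $\bs{m} = C(f)\sqcup B(f)$. The only cosmetic difference is that the paper reduces to the case $\mathbf{0}\in\img(f)$ via translation invariance $C(f) = C(f\oplus\w{y})$, whereas you carry the translation vector $\mathbf{v}_0$ through the bilinearity computation explicitly---your write-up actually justifies the orthogonality identification that the paper merely asserts.
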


\begin{proof}
The fact that the dimension of $C(f)$ is $m-r$ follows from two ideas. Firstly, that $C$ is invariant by translation---that is $C(f) = C(f\oplus\w{y})$ for every $\w{y}\in\{0,1\}^m$, where $f\oplus\w{y}:\{0,1\}^n\to\{0,1\}^m$ is the function $(f\oplus\w{y})(\w{x}) = f(\w{x})\oplus\w{y}$---and secondly, that if $\w{0}\in\img(f)$, then $C(f) = \langle\img(f)\rangle^{\perp}$, the orthogonal vector subspace to the one generated by $\img(f)$.

It is immediate then that $\#C(f) = 2^{m-r}$, which together with the fact that $f$ is fully balanced implies that $\#B(f) = 2^m-2^{m-r} = (2^{r}-1)2^{m-r}$. By definition of $b(f)$, it follows that $b(f) = 2^r-1$.
\end{proof}

In particular, we have the following.

\begin{corollary}
Let $f:\bs{n}\to\bs{m}$ be a Boolean function, and let $r$ be the dimension of $\img(f)$. Then, the function is fully balanced if and only if $b(f) = 2^r-1$.
\end{corollary}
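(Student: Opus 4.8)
The plan is to treat the two implications separately. The forward direction is essentially free: if $f$ is fully balanced, then Theorem \ref{fbifun} applies verbatim and already gives $b(f) = 2^r-1$. So all of the work sits in the converse, and the decisive observation is that one half of the proof of Theorem \ref{fbifun} never used the fully balanced hypothesis at all.

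Concretely, I would first isolate the fact that $\#C(f) = 2^{m-r}$ holds for \emph{every} Boolean function $\bff{n}{m}$, where $r = \dim\img(f)$ is read as the dimension of the affine span of the image. The two ingredients in the proof of Theorem \ref{fbifun} --- translation invariance $C(f) = C(f\oplus\w{y})$, and the identity $C(f) = \langle\img(f)\rangle^{\perp}$ once we translate so that $\w{0}\in\img(f)$ --- are both valid for arbitrary $f$: a marker $\w{y}$ makes $f$ constant precisely when $\w{y}\cdot\w{v}$ is constant over $\w{v}\in\img(f)$, an affine condition that is insensitive to full balancedness. Hence $\dim C(f) = m-r$ and $\#C(f) = 2^{m-r}$ unconditionally.

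Next I would combine this with the earlier proposition describing $C(f)$ and $B(f)$. Since $C(f)$ and $B(f)$ are disjoint subsets of $\bs{m}$, and $f$ is fully balanced exactly when every marker lies in one of the two, $f$ is fully balanced if and only if $\#C(f) + \#B(f) = 2^m$. Feeding in $\#C(f) = 2^{m-r}$, this is equivalent to $\#B(f) = (2^r-1)2^{m-r}$, that is, to $b(f) = \#B(f)/\#C(f) = 2^r-1$. Chaining these equivalences closes the converse, and together with the forward direction it yields the corollary.

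The only point that needs genuine care --- and the natural place for the argument to go wrong --- is the claim that $\#C(f) = 2^{m-r}$ survives the removal of the fully balanced hypothesis. I would therefore flag it as a short standalone remark and recheck the two steps above against the proof of Theorem \ref{fbifun}, paying particular attention to reading $r$ as the dimension of the affine span of $\img(f)$ so that the translation argument lands on the correct value (when $\img(f)$ is affine, as in the fully balanced case, this coincides with the earlier usage). Once that is secured, the remaining counting is routine.
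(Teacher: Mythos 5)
Your proposal is correct and follows the route the paper intends: the paper states this corollary without proof, as an immediate consequence of Theorem \ref{fbifun}, and your argument is precisely the completion of that implication. In particular, your key observation --- that the computation $\dim C(f) = m-r$ in the proof of Theorem \ref{fbifun} (translation invariance plus $C(f) = \langle\img(f)\rangle^{\perp}$) never uses the fully balanced hypothesis, provided $r$ is read as the dimension of the affine span of $\img(f)$, so that $b(f) = 2^r-1$ forces $\#C(f)+\#B(f) = 2^{m-r} + (2^r-1)2^{m-r} = 2^m$ and hence every marker either balances $f$ or makes it constant --- is exactly the point the paper leaves implicit, and your disjointness-and-counting argument closes it correctly.
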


This is the characterisation that we will use throughout the rest of the paper. Indeed, we will construct a method that, in the case of fully balanced functions, will allow us to compute $b(f)$ and $\img(f)$.

\begin{remark}
Lastly, we will use both Theorem \ref{fullyb} and Theorem \ref{fbifun} to give some examples of fully balanced functions.

Of course, generalised constant and balanced functions (which were studied in \cite{cleve,gpk}) are instances of fully balanced functions. In the constant case, the image is just an affine space of dimension zero and the balancing index is also zero, while in the balanced situation both the dimension of the image and the balancing index are one.

Another example of fully balanced functions are linear and affine functions. In the case of these functions, which are determined by an $m\times n$ matrix whose elements are in $\bs{}$ and a vector in $\bs{m}$, the dimension of the image will depend on the rank of the matrix. Indeed, if $r$ is such a rank, then the dimension of the image is $r$ and the balancing index is $2^r-1$. If we took $r = 0$ and $r = 1$, we would have constant and balanced functions respectively, but it should be noted that not every balanced function is an affine function.

Finally, we can have other kinds of fully balanced functions, $f:\bs{n}\to\bs{m}$, in which the sets $f^{-1}(\w{z})$ for $\w{z}\in\bs{m}$ do not need to have any sort of structure, as will be the case with Example 1 in Remark \ref{example1}.
\end{remark}

\end{section}

\begin{section}{The Fully Balanced Image Problem}

Firstly, we must define the problem that we will try to solve in this section.

\begin{definition}{(The Fully Balanced Image problem.)}
Let $\bff{n}{m}$ be a fully balanced Boolean function, the Fully Balanced Image problem, or FBI problem, will be that of determining the dimension of $\img(f)$.
\end{definition}

Of course, the Generalised Deutsch-Jozsa problem is an instance of the FBI problem where the dimension of $\img(f)$ can be either $0$ or $1$.

Now, let us present a method to determine the dimension of $\img(f)$---and thus all the other parameters that appear in Theorem \ref{fbifun}---of a given fully balanced function with complete certainty. It will consist of the selection of a succession of markers for which we will apply the $\GPK$ algorithm. For the remainder of the section, we will fix $r = \dim(\img(f)).$

\begin{subsection}{First marker selection algorithm}

In the first situation that we will consider, the Boolean function $\bff{n}{m}$  will be such that either $r=0$ (constant) or $r=1$ (balanced). This problem is actually the Generalised Deutsch--Jozsa problem (which was studied in \cite{gpk}), but it is important to place it in this new context.

\vspace{5mm}

\textbf{Marker Selection Algorithm 1:}

\vspace{5mm}

$\mathbb{STEP}$ $1$

$C=\varnothing.$

Take $\mathbf{y}\in\bs{m}$ such that $\mathbf{y}\notin \langle C\rangle$ and apply $\GPK(\mathbf{y})$. If the result is different from $\mathbf{0}$, then we have ended. If not, add $\mathbf{y}$ to $C$.

\vspace{5mm}

$\mathbb{STEP}$ $2$

Repeat Step $1$ until we get a result different from $\mathbf{0}$ or there are no strings left to choose. In the first situation $f$ is balanced ($r=1$) and in the former it is constant ($r=0$).

\begin{theorem}{(Correctness of Marker Selection Algorithm 1.)}
Marker Selection Algorithm 1 correctly distinguishes the cases $r=0$ and $r=1$ with certainty in at most $m$ applications of the $\GPK$.
\end{theorem}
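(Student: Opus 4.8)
The plan is to reduce the behaviour of the algorithm to two facts already established: Theorem \ref{gpkbal}, which tells us that $\GPK(\mathbf{y})$ returns $\mathbf{0}$ exactly when $\mathbf{y}$ makes $f$ constant and returns a result different from $\mathbf{0}$ exactly when $\mathbf{y}$ balances $f$; and Theorem \ref{fbifun}, which pins down $\dim C(f)$ in terms of $r$. Throughout I would use that $f$ is fully balanced, so that every $\mathbf{y}\in\bs{m}$ lies in exactly one of $C(f)$ or $B(f)$.

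First I would record the invariant that drives the whole argument. Every string the algorithm adds to the set $C$ produced a result of $\mathbf{0}$, hence by Theorem \ref{gpkbal} makes $f$ constant; therefore $C\subseteq C(f)$ and $\langle C\rangle\subseteq C(f)$ at every stage. Moreover, each new marker is chosen with $\mathbf{y}\notin\langle C\rangle$, so adjoining it strictly increases $\dim\langle C\rangle$ by one. Starting from $C=\varnothing$, after $k$ additions we have $\dim\langle C\rangle=k$, and these $k$ additions account for $k$ calls of the $\GPK$, each returning $\mathbf{0}$.

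Next I would split into the two cases. If $r=1$, Theorem \ref{fbifun} gives $\dim C(f)=m-1$; combined with $\langle C\rangle\subseteq C(f)$, at most $m-1$ additions can occur, and once $\dim\langle C\rangle=m-1$ we have $\langle C\rangle=C(f)$, so any subsequently chosen $\mathbf{y}\notin\langle C\rangle$ lies outside $C(f)$, hence in $B(f)$, and by Theorem \ref{gpkbal} that call returns a non-zero result and the algorithm halts declaring $r=1$. This uses at most $m-1$ additions plus one final balancing call, i.e. at most $m$ applications of the $\GPK$. If $r=0$, then $C(f)=\bs{m}$, so no marker ever balances $f$; every call returns $\mathbf{0}$ and every chosen marker is added to $C$, until $\langle C\rangle=\bs{m}$ after exactly $m$ additions, at which point no $\mathbf{y}\notin\langle C\rangle$ remains and the algorithm halts declaring $r=0$. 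In both branches the count is at most $m$.

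The step requiring the most care is the termination and counting bookkeeping in the balanced case: one must argue that the algorithm cannot keep adjoining constant markers indefinitely, and that a balancing marker is guaranteed to appear on or before the $m$-th call. This is precisely where the dimension identity $\dim\langle C\rangle=\dim C(f)\Rightarrow\langle C\rangle=C(f)$ is decisive, forcing the next choice to be balanced. Correctness of the output then follows immediately from Theorem \ref{gpkbal}, since a non-zero result is produced if and only if some chosen marker balances $f$, which by full balancedness occurs if and only if $r=1$.
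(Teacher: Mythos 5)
Your proposal is correct and follows essentially the same route as the paper's proof: markers added to $C$ lie in $C(f)$ by Theorem \ref{gpkbal}, and the dimension count from Theorem \ref{fbifun} ($\dim C(f)=m$ when $r=0$ versus $m-1$ when $r=1$) forces the two cases apart. Your version merely spells out the termination and counting bookkeeping that the paper leaves implicit, which is a welcome level of detail but not a different argument.
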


\begin{proof}
First of all, every element in $C$ is actually also in $C(f)$ because of Theorem \ref{gpkbal}. In the case where $r=0$, we know thanks to Theorem \ref{fbifun} that the dimension of $C(f)$ is $m$ and $C(f) = \bs{m}$, while in the case where $r=1$ it is $m-1$. Thus, if we find a string that is not in $C(f)$ we immediately know that we are in the second situation. If we find that there are $m$ independent elements in $C$ then $C(f)$ has dimension $m$ and we are in the first situation.
\end{proof}

\begin{corollary}
Let $\bff{n}{m}$ be a function for which either $r=0$ or $r=r_0>0$. Then $m-r_0+1$ applications of Algorithm 1 will suffice to tell the cases apart, as in the latter possibility $\dim(C(f)) = m-r_0$.
\end{corollary}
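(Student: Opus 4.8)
The plan is to reduce everything to the dimension count $\dim(C(f)) = m - r$ supplied by Theorem \ref{fbifun}, and then to track how many linearly independent markers Algorithm 1 can accumulate before it is forced outside $C(f)$. First I would recall that, by Theorem \ref{gpkbal}, $\GPK(\mathbf{y})$ returns $\mathbf{0}$ exactly when $\mathbf{y}$ makes $f$ constant, and returns something different from $\mathbf{0}$ exactly when $\mathbf{y}$ balances $f$; since $f$ is fully balanced these two cases partition $\bs{m}$, and the constant markers are precisely the elements of $C(f)$. Consequently every $\mathbf{y}$ that the algorithm adds to $C$ is genuinely an element of $C(f)$, and because the algorithm only ever selects $\mathbf{y}\notin\langle C\rangle$, the set $C$ stays linearly independent and $\langle C\rangle$ gains exactly one dimension at each step that yields $\mathbf{0}$.

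Next I would treat the two possible values of $r$ separately. If $r = 0$, then Theorem \ref{fbifun} gives $\dim(C(f)) = m$, so $C(f) = \bs{m}$ and every marker makes $f$ constant; hence each of the first $m - r_0 + 1$ applications of $\GPK$ returns $\mathbf{0}$ (this is legitimate because $r_0 > 0$ forces $m - r_0 + 1 \le m$, so there are always enough independent strings left to choose). If instead $r = r_0$, then $\dim(C(f)) = m - r_0 < m$, so $C(f)$ is a proper subspace. The algorithm can enlarge $\langle C\rangle$ by a fresh dimension of $C(f)$ at most $m - r_0$ times; after $m - r_0$ such steps we have $\langle C\rangle = C(f)$, and the next chosen marker $\mathbf{y}\notin\langle C\rangle = C(f)$ lies outside the constant set, hence balances $f$ and yields a result different from $\mathbf{0}$. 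Therefore in the case $r = r_0$ a non-zero outcome is guaranteed to appear within the first $m - r_0 + 1$ applications, at the latest on the $(m - r_0 + 1)$-th, while the algorithm stops even earlier if it happens to hit a balancing marker sooner.

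Combining the two cases yields the decision rule: run Algorithm 1 for $m - r_0 + 1$ steps; if every outcome is $\mathbf{0}$ we are in the case $r = 0$, and if any outcome differs from $\mathbf{0}$ we are in the case $r = r_0$. The point to argue most carefully is the forced balancing step in the case $r = r_0$: once $\langle C\rangle$ exhausts the $(m - r_0)$-dimensional space $C(f)$, it is full balancedness that guarantees that any vector outside $C(f)$ must be a balancing marker rather than anything else, so the $(m - r_0 + 1)$-th application cannot return $\mathbf{0}$. Everything else is the same independence-counting argument used in the proof of the correctness of Marker Selection Algorithm 1, specialised from $r_0 = 1$ to a general positive $r_0$.
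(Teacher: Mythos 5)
Your proof is correct and follows essentially the same route as the paper's: both arguments rest on storing linearly independent elements of $C(f)$ and on the dimension count $\dim(C(f)) = m - r$ from Theorem \ref{fbifun}, so that in the case $r = r_0$ the span $\langle C\rangle$ exhausts $C(f)$ after at most $m - r_0$ zero outcomes and the next marker is forced to balance $f$. The paper dismisses this as ``trivial'' in two sentences; you have merely spelled out the same independence-counting argument in full, including the worthwhile detail that full balancedness is what makes every marker outside $C(f)$ a balancing one.
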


\begin{proof}
This result is trivial, as we are storing independent elements of $C(f)$ in $C$. If $r=0$, then the dimension of $C(f)$ is $m$, while for any other $r$ the dimension of $C(f)$ is $m-r$, so if we are in the former situation, after $m-r+1$ iterations of the algorithm we should be able to distinguish the case in which we are.
\end{proof}

\end{subsection}

\begin{subsection}{Second marker selection algorithm}

The second situation that we will focus on in the direction of a general algorithm will be to consider a Boolean function $\bff{n}{m}$ for which $r$ can be either $1$ or $2$. That is, $\img(f)$ is either a one-dimensional or a two-dimensional affine space.

\vspace{5mm}

\textbf{Marker Selection Algorithm 2:}

\vspace{5mm}

$\mathbb{STEP}$ $1$

$C = \varnothing.$

$B = \varnothing.$

\vspace{5mm}

$\mathbb{STEP}$ $2$

Take $\mathbf{y}\in\bs{m}$ such that $\mathbf{y} \notin \langle C\cup B\rangle$ and apply $\GPK(\mathbf{y})$.

If the result is $\mathbf{0}$ then add $\mathbf{y}$ to $C$ and, if $\# C = m-1$, finish and return $r = 1$.

Else, run Step $3$.

\vspace{5mm}

$\mathbb{STEP}$ $3$

If $\# B = 0$, then add $\mathbf{y}$ to $B$.

Else, let $\mathbf{s}\in B$, consider $\mathbf{y}' = \mathbf{s}\oplus \mathbf{y}$ and apply $\GPK(\mathbf{y}')$. If the result is $\mathbf{0}$ then add $\mathbf{y}'$ to $C$.

Else, add both $\mathbf{y}$ and $\mathbf{y}'$ to $B$ and return $r=2$.

\vspace{5mm}

$\mathbb{STEP}$ $4$

Repeat Step $2$ until $\#C = m-1$ or $\# B = 3$.

\begin{theorem}{(Correctness of Marker Selection Algorithm 2.)}
Algorithm 2 correctly distinguishes the cases $r=1$ and $r=2$ with certainty in at most $2m-1$ applications of the $\GPK$.
\end{theorem}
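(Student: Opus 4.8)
The plan is to track, throughout the execution, the algebraic meaning of the sets $C$ and $B$, reading each $\GPK$ call through Theorem \ref{gpkbal}: a call with marker $\mathbf{y}$ returns $\mathbf{0}$ exactly when $\mathbf{y}\in C(f)$ and returns a nonzero string exactly when $\mathbf{y}\in B(f)$. By Theorem \ref{fbifun}, together with the proposition stating that $C(f)$ is a vector space and $B(f)$ a disjoint union of affine cosets of it, the subspace $C(f)$ has dimension $m-r$ and the $2^r$ cosets of $C(f)$ in $\bs{m}$ form the quotient group $\bs{m}/C(f)\cong\mathbb{F}_2^r$, with $C(f)$ itself the zero coset (the constant markers) and the remaining $2^r-1$ cosets comprising $B(f)$ (the balancing markers). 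First I would record the invariant that $C$ always consists of linearly independent elements of $C(f)$ while $B$ consists of balancing markers. Independence is guaranteed by the selection rule $\mathbf{y}\notin\langle C\cup B\rangle$ and by the observation that the auxiliary marker $\mathbf{y}'=\mathbf{s}\oplus\mathbf{y}$ also lies outside $\langle C\cup B\rangle$ (otherwise $\mathbf{y}=\mathbf{s}\oplus\mathbf{y}'$ would lie inside it), so adding $\mathbf{y}$ or $\mathbf{y}'$ to $C$ never breaks independence.

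Next I would exploit the coset arithmetic in $\bs{m}/C(f)$. When $r=1$ there is a single balancing coset, so any two balancing markers $\mathbf{s},\mathbf{y}$ sum to an element of $C(f)$; hence in Step 3 the marker $\mathbf{y}'$ is always constant, $\GPK(\mathbf{y}')$ returns $\mathbf{0}$, and the branch returning $r=2$ is never entered. When $r=2$ the quotient is $\mathbb{F}_2^2$ with three nonzero cosets, and $\mathbf{y}'=\mathbf{s}\oplus\mathbf{y}$ is balancing precisely when $\mathbf{s}$ and $\mathbf{y}$ lie in distinct balancing cosets; this is the only way the $r=2$ branch fires. This gives half of correctness immediately: the algorithm never returns $r=2$ when $r=1$.

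For the rest of correctness I would argue by the size of $C$. Since $C\subseteq C(f)$ and its elements are independent, $\#C\le m-r$ at all times; thus when $r=2$ the stopping value $\#C=m-1$ is never reached, ruling out a spurious answer $r=1$. It remains to see that the procedure always terminates correctly and never exhausts $\bs{m}$ prematurely. Before any return one has $\#C\le m-1$ and $\#B\le 1$, so $\dim\langle C\cup B\rangle\le m-1<m$ and a fresh marker always exists. When $r=1$, $C$ keeps accumulating independent elements of the $(m-1)$-dimensional space $C(f)$ until $\#C=m-1$, and the algorithm correctly returns $r=1$. When $r=2$, once $\#C=m-2$ (a basis of $C(f)$) and $\#B=1$, the span $\langle C\cup B\rangle$ is exactly the union of the zero coset and the coset of $\mathbf{s}$, so every admissible $\mathbf{y}\notin\langle C\cup B\rangle$ is balancing and lies in a coset different from that of $\mathbf{s}$; the resulting $\mathbf{y}'$ is then balancing and the algorithm is forced to return $r=2$.

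Finally I would count $\GPK$ applications. Each pass through Step 2 costs one call, and the conditional call to $\GPK(\mathbf{y}')$ in Step 3 costs one more; the single seed added to $B$ costs one call without enlarging $C$, the final return costs two, and every other element placed in $C$ costs at most two (a balancing $\mathbf{y}$ followed by its $\mathbf{y}'$). The worst case in both branches has all of $C$ filled through the two-call route: for $r=1$ this is one seed plus $m-1$ two-call fillings, i.e. $1+2(m-1)=2m-1$; for $r=2$ it is one seed, $m-2$ two-call fillings of $C$, and a final two-call return, i.e. $1+2(m-2)+2=2m-1$. The main obstacle I anticipate is precisely this bookkeeping: one must dovetail the coset-exhaustion argument that forces termination with the step-by-step cost accounting so that the two separately tight worst cases both land at $2m-1$, while simultaneously confirming that a valid marker remains available right up to the terminating step.
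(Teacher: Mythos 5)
Your proof is correct and takes essentially the same approach as the paper's: both maintain the invariant that $C$ holds linearly independent elements of $C(f)$ while $B$ holds balancing markers, both use the observation that $\mathbf{y}'=\mathbf{s}\oplus\mathbf{y}\notin\langle C\cup B\rangle$ to guarantee progress at every iteration, and both arrive at the same worst-case count $1+2(m-2)+2=2m-1$. Your write-up is in fact more thorough than the paper's (the coset argument showing the $r=2$ branch cannot fire when $r=1$, and the availability of fresh markers, are left implicit there); the only blemish is a harmless off-by-one in your termination argument, since at the moment a fresh marker is chosen one actually has $\#C\le m-2$ (otherwise the algorithm would already have returned $r=1$), and it is this bound that yields $\dim\langle C\cup B\rangle\le m-1<m$.
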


\begin{proof}

Let us first analyse the constant and balancing sets for both of these situations.

If $r=1$, then $\dim(C(f)) = m-1$ and $b(f) = 1$.

If $r = 2$, then $\dim(C(f)) = m-2$ and $b(f) = 3$.

We are using $C$ to store independent elements of $C(f)$ and $B$ to store different classes of $B(f)$, which there will be $b(f)$ of. Our goal is to either obtain $\#C = m-1$ or $\#B > 1$.

We are guaranteed to get a new independent element in $C$ or $B$ after each iteration of Step $3$, since the new $\mathbf{y}$ is independent of $C\cup B$, so if for any $\mathbf{s}\in B$, if we have that $\mathbf{y}' = \mathbf{s}\oplus\mathbf{y}$ is in $C(f)$, then it is neither $\mathbf{0}$ nor dependent on $C$.

In the worst situation, we start by getting an element of $B(f)$ and then proceed to obtain independent elements of $C(f)$ via Step $3$, which require two applications of the $\GPK$. Thus, after getting $m-2$ elements of $C$, we would have computed $1+2(m-2) = 2m -3$ times the $\GPK$, and we would still need one final element in $C$ or in $B$, which would take at most two applications of the $\GPK$ via Step $3$, so the final count of $\GPK$s would be $2m-1$.
\end{proof}

As we saw in the Generalised Deutsch--Jozsa situation and will see with the final general algorithm, if we get all the elements in $C$, then we can actually compute the elements of $\img(f)$, as the solution to the system of equations given by
$$
\{\mathbf{s}\cdot \mathbf{x} = 0 \text{ for } \mathbf{s}\in C\},
$$
is actually the underlying vector space of $\img(f)$. If we wish to know $\img(f)$ we would need a final classical application of $f$ to determine the class we are in.

\end{subsection}

\begin{subsection}{General marker selection algorithm}

Finally, we will present a general algorithm for marker selection that solves our problem for every $r$ with complete certainty. As we shall see, this algorithm works with $\mathcal{O}((m-r+1)2^r-1)$ applications of the $\GPK$, so it is efficient when $r \sim \log_2(m)$.

\vspace{5mm}

\textbf{Marker Selection Algorithm 3:}

\vspace{5mm}

$\mathbb{STEP}$ $1$

$C = \varnothing.$

$B = \varnothing.$

\vspace{5mm}

$\mathbb{STEP}$ $2$

Take $\mathbf{y}\in\bs{m}$ such that $\mathbf{y}\notin \langle C\cup B\rangle$ and apply $\GPK(\mathbf{y})$.

If the result is $\mathbf{0}$, then add $\mathbf{y}$ to $C$ and repeat Step $2$.

Else, run Step $3$.

\vspace{5mm}

$\mathbb{STEP}$ $3$

If $\# B = 0$, then add $\mathbf{y}$ to $B$ and go to Step $5$.

Else, for each $\mathbf{s}\in B$, consider $\mathbf{y}_\mathbf{s} = \mathbf{s}\oplus \mathbf{y}$ and run Step $4$.

\vspace{5mm}

$\mathbb{STEP}$ $4$

Apply $\GPK(\mathbf{y}_\mathbf{s})$. If the result is $\mathbf{0}$, then add $\mathbf{y}_\mathbf{s}$ to $C$ and go to Step $5$. Else, continue with the next element of $B$ in Step $3$.

If for every $\mathbf{s}\in B$ we get a result different from $\mathbf{0}$, add $\mathbf{y}$ and every $\mathbf{y}_\mathbf{s}$ to $B$ and go to Step $5$.

\vspace{5mm}

$\mathbb{STEP}$ $5$

Repeat Step $2$ until there are no more independent elements.

\begin{remark}{(Example 1.)}\label{example1}
Let $\bff{4}{4}$ be defined by:

\vspace{5mm}

\begin{center}
\begin{tabular}{ c c c c }
 $f(0000) = 0001$ & $f(0001) = 0000$ & $f(0010) = 0000$ & $f(0011) = 1100$ \\
 $f(0100) = 0000$ & $f(0101) = 0001$ & $f(0110) = 0001$ & $f(0111) = 1101$ \\
 $f(1000) = 1100$ & $f(1001) = 1100$ & $f(1010) = 1101$ & $f(1011) = 0000$ \\
 $f(1100) = 0001$ & $f(1101) = 1100$ & $f(1110) = 1101$ & $f(1111) = 1101$.
\end{tabular}
\end{center}

\vspace{5mm}

We must note that $f$ is a function such that $S =\img(f)$ is a vector space of dimension $2$---which together with the uniform multiplicity implies in particular that $f$ is fully balanced---and thus we can use it as input to Algorithm 3. We must remember that although we are explicitly showing $f$ here, it is only for explanatory purposes, as we are supposed to work with $f$ as a black box and not know any information about it.

We begin by setting $C = \varnothing$ and $B = \varnothing$. Then, we take $\mathbf{y} = 0001$, which is independent of $C\cup B$, and apply $\GPK(0001)$ to $f$. As $f$ is $0001$-balanced, we get a result different from $\mathbf{0}$ by Theorem \ref{gpkbal}, so after Step 3 we add $0001$ to $B$.

This tells us that we are not in the constant case, i.e., the case where the rank of $f$ is $0$.

We proceed by taking a new $\mathbf{y} = 0010$ independent of $C\cup B$. As $S$ is $0010$-constant, we know that the result after applying $\GPK(0010)$ to $f$ is $\mathbf{0}$, so we add $0010$ to $C$. This eliminates the possibility of $\rk(S) = 4$.

We continue by taking a new $\mathbf{y} = 0100$ independent of $C\cup B$. Once again, $f$ is $0100$-balanced, so for each element $\mathbf{s}$ of $B = \{0001\}$ we must compute $\mathbf{y}_{\mathbf{s}} = \mathbf{y}\oplus\mathbf{s}$ and apply $\GPK(\mathbf{y}_{\mathbf{s}})$ to $f$. In this case, we only have $0100\oplus 0001 = 0101$, as $f$ is also $0101$-balanced we must add $0101$ and $0100$ to $B$. Doing so we eliminate the possibility of $\rk(S) = 1$.

We finally consider $\mathbf{y} = 1000$, which balances $f$, so again we must apply the $\GPK$ for $1001$, $1100$ and $1101$. As $1100$ makes $f$ constant, we add it to $C$ and end up getting $C = \{0010,1100\}$ and $B = \{0001,0101,0100\}$. This implies that $\rk(S) = 2$, as there are no more independent elements left.

We could follow this up by calculating the exact image of $f$. This can be done by solving the following system with four unknowns:
$$
\begin{cases}
x_2 = 0\\
x_0 \oplus x_1 = 0,
\end{cases}
$$
and computing $f(\mathbf{0})$.
\end{remark}

\begin{theorem}{(Correctness of Marker Selection Algorithm 3.)}
Algorithm 3 correctly distinguishes the cases for a general $r$ with certainty in $\mathcal{O}\left(2^r(m-r+1)-1\right)$ applications of the $\GPK$.
\end{theorem}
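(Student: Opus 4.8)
The plan is to track the algorithm through the group structure of $\bs{m}$ modulo the constant set. By Theorem \ref{fbifun}, $C(f)$ is a subspace of $\bs{m}$ of dimension $m-r$, so the quotient $Q = \bs{m}/C(f)$ is isomorphic to $\f{2}{r}$. Since $f$ is fully balanced, every $\w{y}\in\bs{m}$ is either constant or balancing, and $\w{y}$ is constant exactly when its class $[\w{y}]$ is the zero of $Q$; thus the $2^r-1$ nonzero classes of $Q$ are precisely the balancing classes. The first algebraic fact I would isolate is that, for $\w{y},\w{s}$ balancing $f$, the string $\w{y}\oplus\w{s}$ makes $f$ constant if and only if $[\w{y}]=[\w{s}]$, and otherwise $\w{y}\oplus\w{s}$ lies in another (nonzero, hence balancing) class. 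Combined with Theorem \ref{gpkbal}, this means each call $\GPK(\w{y})$ reveals, with certainty, whether $[\w{y}]=0$ in $Q$, which is exactly the information on which the algorithm branches.

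With this dictionary in hand, I would prove the following loop invariant by induction on the number of $\GPK$ calls: at the start of every visit to Step $2$, the set $C$ is linearly independent and contained in $C(f)$, and the classes $\{[\w{b}]:\w{b}\in B\}$ are exactly the nonzero elements of a subspace $W\leq Q$, so that $\#B = 2^{\dim W}-1$. Preserving the invariant requires checking four branches. When $\GPK(\w{y})=\w{0}$ in Step $2$, then $\w{y}\in C(f)$ and $\w{y}\notin\langle C\cup B\rangle\supseteq\langle C\rangle$, so $C$ stays independent. When $B=\varnothing$, adding a single balancing $\w{y}$ sets $W=\langle[\w{y}]\rangle$. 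In Step $4$, if some $\w{y}_\w{s}=\w{s}\oplus\w{y}$ is constant then $[\w{y}]=[\w{s}]\in W$ and, since $\w{y}\notin\langle C\cup B\rangle$ forces $\w{y}_\w{s}\notin\langle C\rangle$, the new element enlarges $C$ while keeping it independent; if instead every $\w{y}_\w{s}$ is balancing then $[\w{y}]\notin W$, and the classes $[\w{y}]$ together with the $[\w{y}_\w{s}]=[\w{y}]+[\w{s}]$ are exactly the new nonzero classes of $W'=W\oplus\langle[\w{y}]\rangle$, doubling $\#B$ from $2^{\dim W}-1$ to $2^{\dim W+1}-1$.

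Correctness then follows by analysing termination. The algorithm halts precisely when no $\w{y}\notin\langle C\cup B\rangle$ remains, i.e. when $\langle C\cup B\rangle=\bs{m}$. Projecting to $Q$, the image of $\langle C\cup B\rangle$ is exactly $W$ (as $C\subseteq C(f)$ maps to zero), so this can happen only when $W=Q$, forcing $\dim W = r$ and $\#B=2^r-1$; conversely, while $W\subsetneq Q$ any lift of a class outside $W$ is an admissible independent choice not lying in $\langle C\cup B\rangle$, so the loop cannot stop early. Hence at termination $r=\log_2(\#B+1)$ is recovered with certainty.

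Finally I would count $\GPK$ calls by classifying each pass through Step $2$. A constant $\w{y}$ costs one call and enlarges $C$; the first balancing $\w{y}$ costs one call; a pass that detects a repeated class in Step $4$ costs at most $1+\#B\leq 2^r$ and enlarges $C$; and a pass that doubles $W$ from dimension $d$ to $d+1$ sweeps the current $\#B=2^{d}-1$ representatives at cost $1+(2^d-1)=2^{d}$. There are exactly $r$ class-introducing passes (one initial and $r-1$ doublings), contributing $1+\sum_{d=1}^{r-1}2^{d}=2^{r}-1$ calls, and at most $m-r$ passes that enlarge $C$ (since $\langle C\rangle\subseteq C(f)$ has dimension $m-r$), each costing at most $2^r$. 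Summing gives at most $(m-r)2^{r}+2^{r}-1 = 2^{r}(m-r+1)-1$ calls, as claimed. I expect the main obstacle to be this worst-case accounting: one must argue that the extremal ordering---every enlargement of $C$ occurring through a full Step $4$ sweep after $B$ has already reached its maximal size---is the one that saturates the bound, and must keep the doubling cost $2^{r}-2$ cleanly separated from the per-$C$-addition cost so that the constants combine to the stated expression.
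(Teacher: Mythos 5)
Your proposal is correct and follows essentially the same route as the paper's proof: accumulate independent elements of $C(f)$ in $C$ and class representatives of $B(f)$ in $B$, show each pass through Step 2 either enlarges $C$ by one or doubles $\#B+1$, and count costs ($2^r-1$ calls in total to exhaust $B$, at most $2^r$ calls for each of the at most $m-r$ further elements of $C$), giving $2^r(m-r+1)-1$; your quotient-space loop invariant simply makes explicit the class structure of $B(f)$ that the paper established in its earlier proposition. Your closing worry about having to identify an extremal ordering is unfounded: since every $C$-enlarging pass costs at most $2^r$ calls regardless of when it occurs, and the class-introducing passes cost $2^r-1$ in total, the bound holds for any interleaving, exactly as you computed.
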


\begin{proof}
As before, we will store independent elements of $C(f)$ in $C$ and representatives of the different classes of $B(f)$ in $B$. Each time we store an element in $C$ we eliminate one of the possibilities for $r$ starting from $r=m$. That is, the first element in $C$ will discard $r = m$, the second will discard $r = m-1$ and so on.

Each time we store at least $2^r$ elements in $B$ we eliminate the possibility of that particular $r$, so the first element of $B$ will discard $r=0$, the second $r=1$, the fourth $r=2$ and the $2^k$-th will discard $r = k$.

We will obtain all elements in $C$ and $B$, but the most costly thing is to get first all elements in $B$, and then all the elements in $C$. Once again, after each iteration of the algorithm we are guaranteed to either get a new independent element for $C$ or $2^k$ new elements for $B$, where $k$ is $\#B$ before the iteration. This is because the new element that we introduce, $\mathbf{y}$, is independent of $C\cup B$.

It is clear now that until we get all the elements of $B$, to get the $k$-th set of elements we would need exactly $k$ applications of the $\GPK$. As there will be $2^r-1$ of these elements, we will need $2^r-1$ total applications of the $\GPK$ to exhaust $B$.

For each of the remaining $m-r$ elements of $C$, we would need at most $2^r$ appli\-ca\-tions of the $\GPK$ each, so we end up with $2^r(m-r)$. After all the calculations, we get:
$$
2^r(m-r)+2^{r}-1 = 2^r(m-r+1)-1.
$$
\end{proof}

Again, as with the previous algorithm, once we get all the elements in $C$ we can obtain the elements of $\img(f)$ by solving the system of equations and performing a final classical application of $f$.

\begin{remark}
If we wish to distinguish $r = m-1$ from $r = m$, then the algorithm actually takes $\mathcal{O}\left(2^m-1\right)$ applications of the $\GPK$.
\end{remark}

The next result will analyse the impossibility of actually solving this problem efficiently by means of a marker selection algorithm.

\begin{theorem}
There is no marker selection algorithm that can solve the problem of distinguishing the cases $r=m-1$ and $r=m$ with certainty in less than $2^m-1$ calls to the $\GPK$.
\end{theorem}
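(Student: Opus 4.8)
The plan is to reduce this quantum lower bound to a purely classical adversary (decision--tree) argument. The first thing I would do is make explicit what a single call can reliably return. When the input $f$ is fully balanced, Theorem \ref{gpkbal} guarantees that $\GPK(\mathbf{y})$ returns $\mathbf{0}$ precisely when $\mathbf{y}\in C(f)$ and returns a (random) nonzero string precisely when $\mathbf{y}\in B(f)$. Hence the only datum a single call supplies \emph{with certainty} is the one bit recording whether $\mathbf{y}$ lies in $C(f)$ or in $B(f)$; the particular nonzero outcome is not reproducible and cannot be relied upon by an algorithm that must be correct on every run. I would also note at this point that $\mathbf{y}=\mathbf{0}$ always lies in $C(f)$, so a call on $\mathbf{0}$ is uninformative and may be ignored.

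Next I would pin down the two cases using Theorem \ref{fbifun}. If $r=m$ then $\dim C(f)=0$, so $C(f)=\{\mathbf{0}\}$ and every nonzero marker balances $f$. If $r=m-1$ then $\dim C(f)=1$, so $C(f)=\{\mathbf{0},\mathbf{c}\}$ for a unique nonzero $\mathbf{c}$, and every nonzero marker other than $\mathbf{c}$ balances $f$. Thus, at the level of the membership bits, distinguishing the two cases is exactly the problem of deciding whether, among the $2^m-1$ nonzero markers, one lies in $C(f)$: an unstructured search over $2^m-1$ candidates under the promise that there is either no marked candidate ($r=m$) or exactly one ($r=m-1$).

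The core is then a standard adversary construction, for which I must supply the witnesses via Theorem \ref{fbifun}. For the $r=m$ side, take any fully balanced $f_\ast$ with $\img(f_\ast)=\bs{m}$ (which exists whenever $n\ge m$, e.g.\ a map hitting each point of $\bs{m}$ with uniform multiplicity $2^{n-m}$); since $\mathbf{0}\in\img(f_\ast)$ we get $C(f_\ast)=\{\mathbf{0}\}$. For each nonzero $\mathbf{u}\in\bs{m}$, take a fully balanced $f_{\mathbf{u}}$ whose image is the hyperplane $\langle\mathbf{u}\rangle^{\perp}=\{\,\mathbf{x}:\mathbf{x}\cdot\mathbf{u}=0\,\}$, again with uniform multiplicity; this has $r=m-1$ and, since $\mathbf{0}\in\img(f_{\mathbf{u}})$, the formula $C(f_{\mathbf{u}})=\langle\img(f_{\mathbf{u}})\rangle^{\perp}$ yields $C(f_{\mathbf{u}})=\{\mathbf{0},\mathbf{u}\}$. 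Now run any marker selection algorithm and let the adversary answer ``balanced'' (nonzero) to every query. If the algorithm halts after fewer than $2^m-1$ calls, then at most $2^m-2$ nonzero markers have been asked about, so some nonzero $\mathbf{u}$ was never queried. Every query is answered identically under $f_\ast$ and under $f_{\mathbf{u}}$ (all queried nonzero markers lie in $B$ for both), yet the required outputs differ; hence whatever the algorithm returns is wrong for one of the two inputs. This forces at least $2^m-1$ calls.

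The step I expect to be the main obstacle is the reduction of the first paragraph: justifying rigorously that an algorithm striving for certainty gains nothing from the random nonzero measurement outcomes of $\GPK$. The adaptive and possibly randomised nature of a general marker selection algorithm has to be absorbed into the adversary argument, concretely by observing that the ``always balanced'' responses force a single computation path whose queried marker set is too small to separate $f_\ast$ from the matching $f_{\mathbf{u}}$, so that no choice of internal randomness and no reuse of a marker can repair the ambiguity while preserving certainty.
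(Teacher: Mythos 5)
Your proposal is correct and follows essentially the same route as the paper: an adversary argument in which fewer than $2^m-1$ calls leave some nonzero marker $\mathbf{u}$ unqueried, and the fully balanced function whose image is the hyperplane $\{\mathbf{x} : \mathbf{u}\cdot\mathbf{x}=0\}$ is then indistinguishable from a full-image function, since every queried marker balances both. The obstacle you flag---that the particular nonzero $\GPK$ outcomes might themselves carry usable information---is not addressed by the paper either; its proof implicitly treats each call as returning only the bit ``zero or nonzero,'' so your write-up is, if anything, the more explicit one.
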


\begin{proof}
Suppose we are at the case $r = m$. Then, for any sequence of $2^m-2$ markers we would get a result different from $\mathbf{0}$ when applying the $\GPK$, but whichever is the marker $\mathbf{y}\in\bs{m}$ we left out, there is a vector space of dimension $2^{m-1}$ which would have gotten the same result. That vector space is the solution to the equation $\mathbf{y}\cdot\mathbf{x} = 0$ where $\mathbf{x}\in\bs{m}$ is the string of unknowns.
\end{proof}

\end{subsection}

\end{section}

\begin{section}{A generalised Simon's problem}

In Section 3 we analysed the capabilities of the $\GPK$ when detecting some trace of structure in the image of a Boolean function $f:\bs{n}\to\bs{m}$. After taking a look at Simon's situation---as we did in \cite{gpk}---it is natural to pose the following question: how does the $\GPK$ behave when the structure is not in the image of $f$, but rather in the sets that share the same image? 

In particular, we want to delve into Boolean functions $\bff{n}{m}$ which satisfy that there is a vector subspace $S\subset\bs{n}$ such that $f$ factorises through $S$, i.e., $\bar{f}:\bs{n}/S\to \bs{m}$ is well defined. This generalised version of Simon's problem can be found in \cite{kaye}, and is an instance of a bigger problem known as the hidden subgroup problem.

\begin{definition}{(Hidden subgroup problem.)}
Let $G$ be a group, $X$ a set, $H\leq G$ and $f:G\to X$. We say that $f$ hides $H$ if for any two given $g_1,g_2\in G$ we have that $f(g_1) = f(g_2)$ if and only if $g_1H = g_2H$. The HSP or hidden subgroup problem is that of determining $H$ using $f$ as an oracle.
\end{definition}

In particular, this generalised version of Simon's problem can be stated as follows.

\begin{definition}{(Generalised Simon's problem.)}
Let $\bff{n}{m}$ such that there is a vector subspace $S\subset\bs{n}$ for which $f(\w{x}) = f(\w{y})$ if and only if $\w{x}-\w{y}\in S$. The problem is to find such $S$.
\end{definition}

This problem is solved by Simon's algorithm in an analogous way to Simon's problem, the proof of which can be found in \cite{kaye}. The main idea is that after each iteration we are left with a superposition of all the states in $C(S)$---which is the orthogonal subspace to $S$---all of them with the same probability. Can we use the $\GPK$ to solve this problem? The answer is yes, as we will see in the following result.

\begin{proposition}
Let $\bff{n}{m}$ be a generalised Simon function with $S\subset\bs{n}$ as its hidden subgroup, and let $\w{y}\in\bs{n}\setminus \{\w{0}\}$. Then, the amplitude of $\w{z}\in\bs{n}$ in the final quantum state $\ket{\vp_4}_n$ of $\GPK(\w{y})$ is zero if and only if there is an $\w{x}\in S$ such that $\w{z}\cdot\w{x} = 1$.
\end{proposition}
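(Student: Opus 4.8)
The plan is to read off the amplitude $\alpha_{\w{z}}$ straight from Equation (\ref{eq:1}) and to exploit the one structural fact a generalised Simon function gives us: since $f(\w{x}) = f(\w{x}')$ whenever $\w{x}\oplus\w{x}'\in S$, the single bit $\w{y}\cdot f(\w{x})$ is constant on every coset of $S$ in $\bs{n}$. First I would fix a set of coset representatives (a transversal) $T\subset\bs{n}$, so that each $\w{x}\in\bs{n}$ is written uniquely as $\w{x}=\w{t}\oplus\w{s}$ with $\w{t}\in T$, $\w{s}\in S$, and split the sum in (\ref{eq:1}) accordingly. Using $\w{y}\cdot f(\w{t}\oplus\w{s}) = \w{y}\cdot f(\w{t})$ together with $(\w{t}\oplus\w{s})\cdot\w{z} = (\w{t}\cdot\w{z})\oplus(\w{s}\cdot\w{z})$, the double sum factorises:
\begin{equation*}
\alpha_{\w{z}} = \frac{1}{N}\left(\sum_{\w{t}\in T}(-1)^{\w{y}\cdot f(\w{t})\oplus\w{t}\cdot\w{z}}\right)\left(\sum_{\w{s}\in S}(-1)^{\w{s}\cdot\w{z}}\right).
\end{equation*}

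The crux is the inner character sum $\sum_{\w{s}\in S}(-1)^{\w{s}\cdot\w{z}}$. The map $\w{s}\mapsto \w{s}\cdot\w{z}$ is a linear functional on the vector space $S$: if some $\w{x}\in S$ satisfies $\w{z}\cdot\w{x}=1$ the functional is nonzero, hence balanced on $S$, and the inner sum vanishes; if no such $\w{x}$ exists the functional is identically zero and the inner sum equals $\#S$. This is precisely the dichotomy between $\w{z}\notin S^{\perp}$ and $\w{z}\in S^{\perp}$ (where $S^{\perp}=C(S)$ is the orthogonal subspace). It settles the easy implication at once: whenever there is $\w{x}\in S$ with $\w{z}\cdot\w{x}=1$, the inner factor is $0$ and therefore $\alpha_{\w{z}}=0$.

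For the converse I would argue by contraposition: assuming $\w{z}\cdot\w{x}=0$ for every $\w{x}\in S$, the inner factor equals $\#S\neq 0$, so what remains is to show the outer transversal sum $\sum_{\w{t}\in T}(-1)^{\w{y}\cdot f(\w{t})\oplus\w{t}\cdot\w{z}}$ is itself nonzero. Here I would pass to the quotient map $\bar f:\bs{n}/S\to\bs{m}$, which is well defined and injective by the defining property of $f$, observe that any $\w{z}\in S^{\perp}$ induces a linear functional $\bar{\w{z}}$ on the quotient, and recognise the outer sum as a Walsh-type coefficient of the Boolean function $\w{t}\mapsto\w{y}\cdot\bar f(\w{t})$ over $\bs{n}/S$ (in the sense of Equation (\ref{eq:2})). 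Non-vanishing of this coefficient is exactly the assertion that $\w{y}\cdot\bar f$ is not balanced against $\bar{\w{z}}$ on the quotient.

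That last non-vanishing is the main obstacle, and it is the genuinely delicate point of the statement: injectivity of $\bar f$ constrains the $\bs{m}$-valued map, but the outer sum only sees the single bit $\w{y}\cdot\bar f(\w{t})$, and an injective $\bar f$ can perfectly well make that bit balanced against $\bar{\w{z}}$. I therefore expect the proof to hinge on how the marker $\w{y}$ interacts with $\img(f)$---in the language of Proposition \ref{ampys}, on ruling out that $\w{y}$ balances the quotient function against the induced direction $\bar{\w{z}}$---and this is where I would concentrate the work, since it is precisely the ingredient that distinguishes $\GPK$ with a fixed marker from the marker-free Simon procedure whose output is supported on all of $S^{\perp}$.
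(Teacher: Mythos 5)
Your coset decomposition is exactly the paper's proof of the forward implication: the paper fixes a transversal $I$ of $S$ in $\bs{n}$, factorises
$$
\alpha_{\w{y}}(\w{z}) = \sum_{\w{s}\in I}(-1)^{f(\w{s})\cdot\w{y}\oplus\w{s}\cdot\w{z}}\sum_{\w{x}\in S}(-1)^{\w{x}\cdot\w{z}},
$$
and observes that the inner character sum vanishes as soon as some $\w{x}_0\in S$ has $\w{z}\cdot\w{x}_0=1$. Up to that point your argument and the paper's coincide step for step.

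Where you and the paper part ways is instructive: the paper simply stops there, proving only the ``if'' half of the claimed equivalence and never addressing the converse. You correctly isolated the converse --- non-vanishing of the transversal sum $\sum_{\w{t}\in T}(-1)^{\w{y}\cdot f(\w{t})\oplus\w{t}\cdot\w{z}}$ for $\w{z}\in S^{\perp}$ --- as the delicate point, and your suspicion that injectivity of $\bar{f}$ cannot control the single bit $\w{y}\cdot\bar{f}(\w{t})$ is justified: the converse is in fact false as stated. Take $n=m=2$, $S=\{00,11\}$, $f(00)=f(11)=01$, $f(01)=f(10)=10$, and marker $\w{y}=11$. Then $\w{y}\cdot f(\w{x})=1$ for every $\w{x}$, so the first register of the final state is $-\ket{00}_2$, and the amplitude of $\w{z}=11$ is zero even though $\w{z}\cdot\w{x}=0$ for every $\w{x}\in S$ (note $11\in S^{\perp}$ since $11\cdot 11=0$ in $\f{2}{2}$). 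In the paper's own terminology this is the case of a marker $\w{y}$ that makes $f$ constant, which annihilates every nonzero $\w{z}$, including those in $S^{\perp}$. So no amount of work would have closed the gap you flagged; the correct content of the proposition is one-directional --- the support of the final state is contained in $S^{\perp}$, possibly properly --- and that weaker statement is all the subsequent random-marker propositions actually use, since averaging over all markers $\w{y}$ restores the uniform probability $K/N$ on $S^{\perp}$.
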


\begin{proof}
If we recall that the final state of the $\GPK$ is:
$$
\ket{\varphi_4}_{n} =\frac{1}{N}\sum_{\mathbf{z}\in\{0,1\}^n} \left[\sum_{\mathbf{x}\in\{0,1\}^n} (-1)^{f(\mathbf{x})\cdot \mathbf{y} \oplus \mathbf{x}\cdot \mathbf{z}} \right]\ket{\mathbf{z}}_n,
$$
then, if we take a $\w{z}$ as described, we have that its amplitude is:
$$
\alpha_{\w{y}}(\w{z}) = \sum_{\mathbf{x}\in\{0,1\}^n} (-1)^{f(\mathbf{x})\cdot \mathbf{y} \oplus \mathbf{x}\cdot \mathbf{z}}.
$$

We know that $S$ divides $\bs{n}$ in different orbits. If we take a set of representatives of each of the orbits, $I$, we have:
$$
\alpha_{\w{y}}(\w{z}) = \sum_{\mathbf{s}\in I}\sum_{\w{x}\in S} (-1)^{f(\mathbf{s})\cdot \mathbf{y} \oplus (\w{s}\oplus\mathbf{x})\cdot \mathbf{z}} = \sum_{\mathbf{s}\in I}(-1)^{f(\mathbf{s})\cdot \mathbf{y} \oplus \w{s}\cdot \mathbf{z}}\sum_{\w{x}\in S} (-1)^{\w{x}\cdot\w{z}}.
$$

And as there is an $\w{x}_0\in S$ such that $\w{z}\cdot\w{x}_0 = 1$, then $\displaystyle\sum_{\w{x}\in S} (-1)^{\w{x}\cdot\w{z}} = 0$.
\end{proof}

As with Simon's problem, we end up with a superposition of the same states as in Simon's algorithm, but with different amplitudes and probabilities. However, we proved in \cite{gpk} the following result.

\begin{proposition}
Let $f:\{0,1\}^n\to\{0,1\}^m$ be a generalised Simon function with $S$ as the vector subspace that factors it. If we apply the $\GPK$ algorithm with random marker selection among $\{0,1\}^m$, then the probability of obtaining a given $\w{z}\in\f{2}{n}$ as a result is:
$$
p(\w{z}) = \begin{cases}
K/N & \text{if } \w{z}\cdot\w{x} = 0 \text{ for all } \w{x}\in S\\
0 & \text{otherwise,}
\end{cases}
$$
where $k$ is the dimension of $S$ and $K = 2^k$.
\end{proposition}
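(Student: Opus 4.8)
The plan is to reduce the statement to the single-marker analysis of the previous proposition, together with an averaging argument over the uniformly chosen marker. Since $\w{y}$ is selected uniformly from $\bs{m}$ and, conditioned on $\w{y}$, a measurement returns $\w{z}$ with probability $\tfrac{1}{N^2}|\alpha_{\w{y}}(\w{z})|^2$ (the modulus squared of the coefficient $\tfrac1N\alpha_{\w{y}}(\w{z})$ of $\ket{\w{z}}_n$ in $\ket{\varphi_4}_n$), the law of total probability gives
$$
p(\w{z}) = \frac{1}{2^m}\sum_{\w{y}\in\bs{m}} \frac{1}{N^2}\left|\alpha_{\w{y}}(\w{z})\right|^2.
$$
The branch where some $\w{x}\in S$ satisfies $\w{z}\cdot\w{x}=1$ is then immediate: by the previous proposition $\alpha_{\w{y}}(\w{z})=0$ for every marker $\w{y}$, hence $p(\w{z})=0$, matching the lower line of the claimed formula.

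For the remaining branch, in which $\w{z}\cdot\w{x}=0$ for all $\w{x}\in S$, I would reuse the orbit decomposition already displayed above. Fixing a set $I$ of coset representatives of $S$ in $\bs{n}$ (so that $\#I = N/K$) and using that $\sum_{\w{x}\in S}(-1)^{\w{x}\cdot\w{z}}=K$ precisely in this case, the inner sum collapses to
$$
\alpha_{\w{y}}(\w{z}) = K\sum_{\w{s}\in I}(-1)^{f(\w{s})\cdot\w{y}\oplus\w{s}\cdot\w{z}}.
$$
Expanding the modulus squared as a double sum over $\w{s},\w{s}'\in I$ and interchanging it with the average over $\w{y}$, all the marker dependence concentrates in the factor $(-1)^{(f(\w{s})\oplus f(\w{s}'))\cdot\w{y}}$, whose average over $\bs{m}$ is governed by character orthogonality: it equals $1$ when $f(\w{s})=f(\w{s}')$ and $0$ otherwise.

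The decisive step is to invoke the defining property of a generalised Simon function. For representatives $\w{s},\w{s}'\in I$ we have $f(\w{s})=f(\w{s}')$ if and only if $\w{s}\oplus\w{s}'\in S$, and because $I$ meets every coset of $S$ exactly once this forces $\w{s}=\w{s}'$. Thus only the diagonal terms survive; for these the phase $(-1)^{(\w{s}\oplus\w{s}')\cdot\w{z}}$ is trivially $1$, and there are $\#I=N/K$ of them. Collecting the constants yields
$$
p(\w{z}) = \frac{1}{N^2}\cdot K^2\cdot\frac{N}{K} = \frac{K}{N},
$$
as required.

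I expect the main difficulty to be bookkeeping rather than conceptual depth. One must keep straight three distinct normalising factors---the $1/N$ from the Hadamard step, the $K$ produced by the inner sum over $S$, and the count $\#I=N/K$ of surviving diagonal terms---and, crucially, be careful that the orthogonality average is taken over the marker space $\bs{m}$ while the orbit sum ranges over $\bs{n}$. A secondary point worth verifying is that the conditional probabilities $\tfrac{1}{N^2}|\alpha_{\w{y}}(\w{z})|^2$ indeed sum to $1$ over $\w{z}$ for each fixed $\w{y}$, so that the appeal to the law of total probability is legitimate.
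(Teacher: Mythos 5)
Your proof is correct. There is, however, no in-paper proof to compare it against line by line: the authors do not prove this proposition here but quote it from their earlier work \cite{gpk} (``we proved in \cite{gpk} the following result''), and the only internal evidence of their method is the proof of the proposition that follows, which adapts the \cite{gpk} computation by re-evaluating sums of per-marker probabilities $p_{\w{y}}(\w{z})$ over the marker set. Your argument has exactly that structure---law of total probability for a uniformly chosen marker, the coset decomposition $\alpha_{\w{y}}(\w{z}) = K\sum_{\w{s}\in I}(-1)^{f(\w{s})\cdot\w{y}\oplus\w{s}\cdot\w{z}}$ valid when $\w{z}\cdot\w{x}=0$ for all $\w{x}\in S$, then character orthogonality in $\w{y}$ combined with the Simon property to kill the off-diagonal terms---so you have produced a faithful, self-contained reconstruction rather than a genuinely different route. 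Your intermediate quantities even match theirs: your computation gives $\sum_{\w{y}\in\bs{m}} p_{\w{y}}(\w{z}) = 2^m K/N$, which for $m=n$ is exactly the value $K$ (and $K-1$ once the marker $\w{0}$ is excluded) invoked in the paper's proof of the nonzero-marker variant. Two minor points. First, the proposition you cite for the vanishing branch is stated only for $\w{y}\neq\w{0}$; this costs you nothing, since the same orbit decomposition gives $\alpha_{\w{0}}(\w{z})=0$ whenever some $\w{x}\in S$ has $\w{z}\cdot\w{x}=1$ (with marker $\w{0}$ the only possible outcome is $\w{z}=\w{0}$, which is orthogonal to $S$). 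Second, the normalisation check you flag at the end is immediate: the pre-measurement state is a unit vector of product form, so the first-register coefficients $\frac{1}{N}\alpha_{\w{y}}(\w{z})$ automatically have squared moduli summing to $1$. A small bonus of your argument is that it establishes the formula for arbitrary $m$, whereas the bookkeeping in the paper's neighbouring proof implicitly requires $m=n$.
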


If we take the same approach as with Simon's problem and choose the marker $\w{y}$ at random among the nonzero strings---which was our approach in \cite{gpk}---, we will obtain an improvement over Simon's algorithm, as we are reducing the probability of getting $\w{0}$ as a result of each iteration of the $\GPK$.

\begin{proposition}
Let $f:\bs{n}\to\bs{m}$ be a generalised Simon function with $S$ as the vector subspace that factors it. If we apply the $\GPK$ algorithm with random marker selection among $\bs{m}$, then the probability of obtaining $\w{z}$ is changed to:
$$p(\w{z}) =
\begin{cases}
    (K-1)/(N-1) & \text{if } \w{z} = 0 \\
    K/(N-1) & \text{if } \w{z}\cdot\w{x} = 0 \text{ for all } \w{x}\in S \\
    0 & \text{otherwise.}
\end{cases}
$$
\end{proposition}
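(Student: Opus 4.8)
The plan is to derive this conditional distribution directly from the preceding proposition (uniform marker selection over all of $\bs{m}$), rather than re‑averaging $\lvert\alpha_{\w{y}}(\w{z})\rvert^2$ from scratch. That proposition describes the outcome distribution when $\w{y}$ ranges uniformly over \emph{all} strings, whereas here $\w{y}$ is restricted to the nonzero ones. Since the two sampling schemes differ only in whether the single marker $\w{y}=\w{0}$ is admitted, the law of total probability links them, and the first task is to pin down exactly what $\GPK(\w{0})$ does.

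First I would analyse the degenerate marker $\w{y}=\w{0}$. Setting $\w{y}=\w{0}$ in Equation~(\ref{eq:1}) makes the phase $f(\w{x})\cdot\w{0}$ vanish for every $\w{x}$, so the amplitude of $\w{z}$ collapses to $\tfrac{1}{N}\sum_{\w{x}\in\bs{n}}(-1)^{\w{x}\cdot\w{z}}$, which equals $1$ for $\w{z}=\w{0}$ and $0$ otherwise. Hence $\GPK(\w{0})$ returns $\w{0}$ with certainty and conveys nothing about $S$; this is precisely the outcome we suppress by discarding the zero marker.

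Next I would write the all‑markers distribution $q(\w{z})$ of the preceding proposition as a mixture of the $\w{y}=\w{0}$ outcome and the sought distribution $p(\w{z})$. Using that exactly one of the markers is zero while the remaining $N-1$ are not, this reads
$$
q(\w{0}) = \frac{1}{N} + \frac{N-1}{N}\,p(\w{0}), \qquad q(\w{z}) = \frac{N-1}{N}\,p(\w{z}) \ \text{ for } \w{z}\neq\w{0},
$$
which I would simply solve for $p(\w{z})$. Substituting $q(\w{z}) = K/N$ on the support $\{\w{z}:\w{z}\cdot\w{x}=0 \text{ for all }\w{x}\in S\}$ and $q(\w{z})=0$ off it yields the three cases of the statement at once: $p(\w{0}) = (K-1)/(N-1)$, then $p(\w{z}) = K/(N-1)$ for the remaining $\w{z}$ in the support, and $p(\w{z})=0$ elsewhere. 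A quick consistency check is that these sum to $1$: the single zero‑string term plus the $N/K-1$ other support elements give $\tfrac{K-1}{N-1} + (N/K-1)\tfrac{K}{N-1} = 1$.

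The only delicate point is the bookkeeping in the mixture identity, and in particular that there are $N-1$ admissible (nonzero) markers, which is what produces the clean denominators $N-1$; this reflects the section's convention of drawing $\w{y}$ among the nonzero strings, in keeping with writing $\w{y}\in\bs{n}\setminus\{\w{0}\}$ earlier. Everything else is a routine substitution of the preceding proposition's values together with the observation that conditioning on $\w{y}\neq\w{0}$ leaves the support unchanged, merely reweighting it so that $\w{0}$ loses and the other admissible strings gain probability.
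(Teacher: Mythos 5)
Your proposal is correct, and it takes a genuinely different route from the paper's own proof. The paper proves this by redoing the argument of \cite{gpk} with the zero marker excluded: it asserts the sums $\sum_{\w{y}\neq\w{0}} p_{\w{y}}(\w{0}) = K-1$ and $\sum_{\w{y}\neq\w{0}} p_{\w{y}}(\w{z}) = K$ for the remaining $\w{z}$ orthogonal to $S$, and then divides by the $N-1$ equiprobable nonzero markers. You instead use the preceding proposition (uniform selection over \emph{all} markers gives $K/N$ on all of $S^{\perp}$) as a black box, show that $\GPK(\w{0})$ returns $\w{0}$ with certainty, and invert the mixture identity $q(\w{0})=\tfrac{1}{N}+\tfrac{N-1}{N}p(\w{0})$, $q(\w{z})=\tfrac{N-1}{N}p(\w{z})$. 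Multiplying your identity through by $N$ recovers exactly the paper's two sums, so the mathematical content is equivalent; but your derivation is self-contained within this paper (no appeal to the coset/character-sum computation of \cite{gpk}) and makes transparent why only the $\w{z}=\w{0}$ entry is deflated, namely because the discarded marker concentrates all of its probability mass there. What the paper's route keeps in hand is the individual per-marker probabilities $p_{\w{y}}(\w{z})$, which is what one would need to treat any other restriction of the marker set. Finally, you inherit---and correctly flag---the same notational wrinkle as the paper: the denominators $N-1$ presuppose $2^n-1$ admissible markers (the paper's proof indeed sums over $\w{y}\in\f{2}{n}\setminus\{\w{0}\}$), i.e.\ both proofs implicitly read ``among $\bs{m}$'' as ``among the nonzero strings'' with $m=n$; for general $m$ there are $2^m-1$ nonzero markers and the displayed formulas would need adjusting, so this is a defect of the statement itself rather than of either proof.
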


\begin{proof}
The proof is completely analogous to the one we performed in \cite{gpk}, with the difference that now
$$
\sum_{\w{y}\in\f{2}{n}\setminus\{\w{0}\}} p_{\w{y}}(\w{0}) = K-1,
$$
while
$$
\sum_{\w{y}\in\f{2}{n}\setminus\{\w{0}\}} p_{\w{y}}(\w{z}) = K.
$$
if $\w{z}\neq \w{0}$ and $\w{z}\cdot\w{x} = 0$ for all $\w{x}\in S$.
\end{proof}

As we can see, the bigger the dimension of $S$, the smaller the advantage obtained by using this random selection idea.

\end{section}

\begin{section}{Conclusion and Further Research}

In this paper, we have analysed the $\GPK$ algorithm when taking into account the structure of the image of $f$ by considering the Fully Balanced Image or FBI problem. This problem is important because it is easy to notice that the $\GPK$ has a very predictable behaviour if $\img(f)$ has an affine structure. However, the question remains of whether we can use our algorithm to extract information from $f$ in a more general case.

Another interesting line of research on the $\GPK$ is that of determining its behaviour when the structure is not in $\img(f)$, but in the fact that $f$ factorises through a certain $S\subset\bs{n}$. One of these situations is Simon's problem, where we have shown that we can improve Simon's algorithm and its generalisations. It does not seem unthinkable to suppose that we can further improve this algorithm by constructing a marker selection process instead of choosing them at random.

\end{section}

\section*{Acknowledgements}

This work was supported by the \emph{Ministerio de Ciencia e Innovaci\'on} under Project PID2020-114613GB-I00 (MCIN/AEI/10.13039/501100011033).

\section*{Competing interests}

The authors report that there are no competing interests to declare.

\bibliographystyle{nature}
\bibliography{GPK1}

\end{document}